\newtheorem{definition}{Definition}
\newtheorem{theorem}{Theorem}
\newtheorem{lemma}{Lemma}
\DeclareMathOperator*{\argmax}{arg\,max}
\newcommand{\algo}[1]{\textsc{#1}}
\newcommand{\BO}[1]{O\left(#1\right)}
\newcommand{\BOM}[1]{\Omega\left(#1\right)}
\renewcommand{\epsilon}{\varepsilon}
\newcommand{\Let}[2]{#1 $\leftarrow$ #2}
\newif\iflong
\title{Clustering Uncertain Graphs}
\date{}
\newcommand{\email}[1]{\texttt{#1}} 
\author[1]{Matteo Ceccarello}
\author[1]{Carlo Fantozzi}
\author[1]{Andrea Pietracaprina}
\author[1]{\authorcr Geppino Pucci}
\author[1]{Fabio Vandin}
\affil[1]{
  Department of Information Engineering\\
  University of Padova, Italy\\
  \email{\{ceccarel,fantozzi,capri,geppo,vandinfa\}@dei.unipd.it}
}
\newcommand{\abstracttext}{
An uncertain graph $\mathcal{G} = (V, E, p : E \rightarrow (0,1])$ can
be viewed as a probability space whose outcomes (referred to as
\emph{possible worlds}) are subgraphs of $\mathcal{G}$ where any edge
$e\in E$ occurs with probability $p(e)$, independently of the other
edges. These graphs naturally arise in many application domains where
data management systems are required to cope with uncertainty in
interrelated data, such as computational biology, social network
analysis, network reliability, and privacy enforcement, among the
others. For this reason, it is important to devise fundamental
querying and mining primitives for uncertain graphs. This paper
contributes to this endeavor with the development of novel strategies
for clustering uncertain graphs. Specifically, given an uncertain
graph $\mathcal{G}$ and an integer $k$, we aim at partitioning its
nodes into $k$ clusters, each featuring a distinguished center node,
so to maximize the minimum/average connection probability of any node
to its cluster's center, in a random possible world.  We assess the
NP-hardness of maximizing the minimum connection probability, even in
the presence of an oracle for the connection probabilities, and
develop efficient approximation algorithms for both problems and 
some useful variants. Unlike previous works in the literature, our
algorithms feature provable approximation guarantees and are capable
to keep the granularity of the returned clustering under control. Our
theoretical findings are complemented with several
experiments that compare our algorithms against some 
relevant competitors, with respect to both running-time and
quality of the returned clusterings.  
}
\begin{document}

\twocolumn[
\maketitle
\begin{abstract}
  \abstracttext{}
\end{abstract}
\vspace{2em}
] 

\section{Introduction}
\label{sec:intro}

In the big data era, data management systems are often required to
cope with uncertainty \cite{Aggarwal10}.  Also, many application
domains increasingly produce interrelated data, where uncertainty may
concern the intensity or the confidence of the relations between
individual data objects. In these cases, graphs provide a natural
representation for the data, with the uncertainty modeled by
associating an existence probability to each edge. For example, in
Protein-Protein Interaction (PPI) networks, an edge between two
proteins corresponds to an interaction that is observed through a
noisy experiment characterized by some level of uncertainty, which can
thus be conveniently cast as the probability of existence of that edge
\cite{AsthanaKGR04}. Also, in social networks, the probability of
existence of an edge between two individuals may be used to model the
likelihood of an interaction between the two individuals, or the
influence of one of the two over the other \cite{Kempe2003,AdarR07}.
Other applications of uncertainty in graphs arise in the realm of
mobile ad-hoc networks \cite{BiswasM05,GhoshNYQ07}, knowledge
bases~\cite{Bollacker2008}, and graph
obfuscation for privacy enforcement \cite{BoldiBGT12}. This variety of
application scenarios calls for the development of fundamental
querying and mining primitives for uncertain graphs which, as argued later, can become computationally challenging
even for graphs of moderate size.

Following the mainstream literature \cite{PotamiasBGK10}, an
\emph{uncertain graph} is defined over a set of nodes $V$, a set of
edges $E$ between nodes of $V$, and a probability function $p : E
\rightarrow (0,1]$. We denote such a graph by $\mathcal{G}=(V, E, p)$.
  $\mathcal{G}$ can be viewed as a probability space whose outcomes
  (referred to as \emph{possible worlds}, in accordance with the
  terminology adopted for probabilistic databases
  \cite{BenjellounSHW06,DalviS07}) are graphs $G=(V,E')$ where any
  edge $e \in E$ is included in $E'$ with probability $p(e)$,
  independently of the other edges. The main objective of this work is
  to introduce novel strategies for clustering uncertain graphs,
  aiming at partitioning the node set $V$ so to maximize two
connectivity-related objective functions, which can be seen as
reinterpretations of the objective functions of the classical
$k$-center and $k$-median problems \cite{Schaeffer07} in the framework
of uncertain graphs.

In the next subsection we provide a brief account on the literature on
uncertain graphs most relevant to our work.

\subsection{Related work}
Early work on network reliability has dealt implicitly with the
concept of uncertain graph.  In general, given an uncertain graph, if
we interpret edge probabilities as the complement of failure
probabilities, a typical objective of network reliability analysis is
to determine the probability that a given set of nodes is connected
under random failures. This probability can be estimated through a Monte
Carlo approach, which however becomes prohibitively cumbersome for very
low reliability values. In fact, even the simplest problem of computing
the exact probability that two distinguished nodes $s$ and $t$ are
connected is known to be $\#P$-complete \cite{Ball86,Valiant79}.
In the last three decades, several works have
tried to come up with better heuristics for various reliability
problems on uncertain graphs (see \cite{JinLDW11} and references
therein). Some works have studied various formulations of the problem
of determining the most reliable source in a network subject to edge
failures, which are special cases of the clustering problems studied
in this paper (see \cite{Ding11} and references therein).

The definition of uncertain graph adopted in this paper has been
introduced in \cite{PotamiasBGK10}, where the authors investigate
various probabilistic notions of distance between nodes, and develop
efficient algorithms for determining the $k$ nearest neighbors of a
given source under their different distance measures. It has to be
remarked that the proposed measures do not satisfy the triangle 
inequality, thus ruling out the applicability of traditional metric
clustering approaches. In the last few years, there has been a
multitude of works studying several analytic and mining problems on
uncertain graphs. A detailed account of the state of the art on the
subject can be found in \cite{ParchasGPB15}, where the authors also
investigate the problem of extracting a representative possible world
providing a good summary of an uncertain graph for the purposes of
query processing.

A number of recent works have studied different ways of clustering 
uncertain graphs, which is the focus of this paper. In
\cite{KolliosPT13} the authors consider, as a clustering problem, the
identification of a deterministic \emph{cluster graph}, which
corresponds to a clique-cover of the nodes of the uncertain graph,
aiming at minimizing the expected \emph{edit distance} between the
clique-cover and a random possible world of the uncertain graph, where
the edit distance is measured in terms of edge additions and
deletions. A 5-approximation algorithm for this problem is provided in~
\cite{KolliosPT13}. The main drawback of this approach is that the formulation
of the clustering problem does not allow to control the number of
clusters. Moreover, the approximate solution returned by the proposed
algorithm relies on a shallow star-decomposition of the topology of
the uncertain graph, which always yields a large number of clusters
(at least $|V|/(\Delta+1)$, where $\Delta$ is the maximum degree of a
node in $V$).  Thus, the returned clustering may not exploit
more global information about the connectivity properties of the
underlying topology.

The same clustering problem considered in \cite{KolliosPT13} has been
also studied by Gu et al. in \cite{GuGCY14} for a more general class of
uncertain graphs, where the assumption of edge independence is lifted
and the existence of an edge $(u,v)$ is correlated to the existence of
its adjacent edges (i.e., edges incident on either $u$ or $v$).  The
authors propose two algorithms for this problem, one that, as in
\cite{KolliosPT13}, does not fix a bound on the number of returned
clusters, and another that fixes such a bound. Neither algorithm is
shown to provide worst-case guarantees on the approximation ratio.
 
In \cite{LiuJAS12} a clustering problem is defined with the objective
of minimizing the expected entropy of the returned clustering, defined
with respect to the adherence of the clustering to the connected
components of a random possible world. With this objective in mind,
the authors develop a clustering algorithm which combines a standard
$k$-means strategy with the Monte Carlo sampling approach for
reliability estimation. No theoretical guarantee is offered on the
quality of the returned clustering with respect to the defined
objective function, and the complexity of the approach, which does not
appear to scale well with the size of the graph, also depends on a
convergence parameter which cannot be estimated analytically. In
summary, while the pursued approach to clustering has merit, there is
no rigorous analysis of the tradeoffs that can be exercised between
the quality of the returned clustering and the running time of the
algorithm.

In~\cite{Dongen08} the \emph{Markov Cluster Algorithm} (\algo{mcl}) is
proposed for clustering weighted graphs.  In \algo{mcl}, an edge
weight is considered as a \emph{similarity score} between the
endpoints.  The algorithm does not specifically target uncertain
graphs, but it can be run on these graphs by considering the edge
probabilities as weights.  In fact, some of the aforementioned works
on the clustering of uncertain graphs have used \algo{mcl} for comparison
purposes.  The algorithm focuses on finding \mbox{so-called}
\emph{natural clusters}, that is, sets of nodes characterized by the
presence of many edges and paths between their members.  The basic
idea of the algorithm is to perform random walks on the graph and
to partition the nodes into clusters according to the probability of a
random walk to stay within a given cluster. Edge weights (i.e., the
similarity scores) are used by the algorithm to define the probability
that a given random walk traverses a given edge.  The algorithm's
behaviour is controlled with a parameter, called \emph{inflation},
which indirectly controls the granularity of the clustering. However,
there is no fixed analytic relation between the inflation parameter
and the number of returned clusters, since the impact of the inflation
parameter is heavily dependent on the graph's topology and on the edge
weights.  The author of the algorithm maintains an optimized and very
efficient implementation of \algo{mcl}, against which we will compare
our algorithms in Section~\ref{sec:experiments}.

Finally, it is worth mentioning that the problem of influence
maximization in a social network under the Independent Cascade model,
introduced in \cite{Kempe2003}, can be reformulated as the search of
$k$ nodes that maximize the expected number of nodes reachable from
them on an uncertain graph associated with the social network, where
the probability on an edge $(u,v)$ represents the likelihood of $u$
influencing $v$. A constant approximation algorithm for this problem,
based on a computationally heavy Monte Carlo sampling, has been
developed in \cite{Kempe2003}, and a number of subsequent works have
targeted faster approximation algorithms (see \cite{BorgsBCL14,Tang2015} and
references therein). It is not clear whether the solution of this
problem can be employed to partition the nodes into $k$ clusters that
provide good approximations for our two objective functions.

\subsection{Our Contribution}
\label{sec:contrib}
In this paper we develop novel strategies for clustering uncertain
graphs.  As observed in \cite{LiuJAS12}, a good clustering should aim
at high probabilities of connectivity within clusters, which is
however a hard goal to pursue, both because of the inherent difficulty
of clustering per se, and because of the aforementioned
\#P-completeness of reliability estimation in the specific uncertain
graph scenario.  Also, it has been observed
\cite{PotamiasBGK10,LiuJAS12} that the straightforward reduction to
shortest-path based clustering where edge probabilities become weights
may not yield significant outcomes because it disregards the possible
world semantics.

Motivated by the above scenario, we will adopt the \emph{connection
  probability} between two nodes (a.k.a. two-terminal reliability),
that is, the probability that the two nodes belong to the same
connected component in a random possible world, as the distance
measure upon which we will base our clustering. As a first technical
contribution, which may be of independent interest for the broader
area of network reliability, we show that this measure satisfies a
form of triangle inequality, unlike other distance measures used in
previous works.  This property allows us to cast the problem of
clustering uncertain graphs into the same framework as traditional
clustering approaches on metric spaces, while still enabling an
effective integration with the possible world semantics that must be
taken into account in the estimation of the connection probability.

Specifically, we study two clustering problems, together with some
variations. Given in input an $n$-node uncertain graph $\mathcal{G}$ and an
integer $k$, we seek to partition the nodes of $\mathcal{G}$ into $k$
clusters, where each cluster contains a distinguished node, called
\emph{center}. We will devise approximation algorithms for each of the
following two optimization problems: (a) maximize the Minimum
Connection Probability of a node to its cluster center (MCP problem);
and (b) maximize the Average Connection Probability of a node to its
cluster center (ACP problem).

We first prove that the MCP problem is NP-hard even in the presence of
an oracle for the connection probability, and make the plausible
conjecture that the ACP problem is NP-hard as well.
Our approximation algorithms for the MCP and ACP problems
are both based on a simple deterministic
strategy that computes a \emph{partial} $k$-clustering aiming at
covering a  maximal subset of nodes, given a threshold on the
minimum connection probability of a node to its cluster's center.
By incorporating this strategy within suitable guessing schedules,
we are able to obtain $k$-clusterings with the following guarantees:
\begin{itemize}
\item
  \sloppy
for the MCP problem, minimum connection probability 
$\BOM{p^2_{\rm opt-min}(k)}$, where $p_{\rm opt-min}(k)$ is the maximum
minimum connection probability of any $k$-clustering.
\item
for the ACP problem, average connection probability 
$\BOM{(p_{\rm opt-avg}(k)/\log n)^3}$, where $p_{\rm opt-avg}(k)$ is the maximum
average connection probability of any $k$-clustering.
\end{itemize}
We also discuss variants of our algorithms that allow 
to impose a limit on the length
of the paths that contribute to the connection probability between two
nodes. Computing provably good clusterings under limited path length
may have interesting applications in scenarios such as the analysis of
PPI networks, where topological distance between two nodes diminishes
their similarity, regardless of their connection probability.

We first present our clustering algorithms assuming the availability
of an oracle for the connection probabilities between pairs of nodes.
Then, we show how to integrate a progressive sampling scheme for the
Monte Carlo estimation of the required probabilities, which essentially
preserves the approximation quality.  Recall that, due to
the \#P-completeness of two-terminal reliability, the Monte Carlo
estimation of connection probabilities is computationally intensive
for very small values of these probabilities. A key feature of our
approximation algorithms is that they only require the estimation of
probabilities not much smaller than the optimal value of the objective
functions.  In the case of the MCP problem, this is achieved by a
simple adaptation of an existing approximation algorithm for the
$k$-center problem \cite{HochbaumS85}, while for the ACP problem our
strategy to avoid estimating small connection probabilities is novel.

To the best of our knowledge, ours are the first clustering algorithms
for uncertain graphs that are fully parametric in the number of
desired clusters while offering provable guarantees with respect to the optimal
solution, together with efficient implementations. While the theoretical
bounds on the approximation are somewhat loose, especially for small values
of the optimum, we report the results of a number of experiments showing
that in practice the quality of the clusterings returned by our
algorithms is very high.
In particular, we perform an experimental comparison of our algorithms with \algo{mcl} which, as discussed above, is widely used in the context of uncertain graphs.
We also compare with a naive adaptation of a classic $k$-center algorithm to verify that such adaptations lead to poor results, prompting for the development of specialized algorithms.
We run our experiments on uncertain graphs derived from PPI networks and on a large collaboration graph derived from DBLP, finding that our algorithms identify good clusterings
with respect to both our metrics, while a clustering strategy that is not
specifically designed for uncertain graphs, as the one employed by
\algo{MCL}, may in some cases provide a very poor clustering.
Moreover, on PPI networks, we evaluate the performance of our algorithms
in predicting protein complexes, finding that we can obtain results
comparable with state-of-the-art solutions.

The rest of the paper is organized as follows. In
Section~\ref{sec:prelims} we define basic concepts regarding uncertain
graphs and formalize the MCP and ACP problems. Also, we prove
a form of triangle inequality for the connection probability measure
and discuss the NP-hardness of the two problems.  In
Section~\ref{sec:algs} we describe and analyze our clustering
algorithms. In Section~\ref{sec:implementation} we show how to
integrate the Monte Carlo probability estimation within the algorithms
while maintaining comparable approximation guarantees.
Section~\ref{sec:experiments} reports the results of the experiments.
Finally,  Section~\ref{sec:conclusions} offers some concluding remarks and
discusses possible avenues of future research.


\section{Preliminaries}
\label{sec:prelims}
Let $\mathcal{G}=(V, E, p)$ be an uncertain graph, as defined in the
introduction. In accordance with the established notation used in
previous work, we write $G\sqsubseteq \mathcal{G}$ to denote that $G$
is a possible world of $\mathcal{G}$. Given two nodes $u, v \in V$,
the probability that they are connected (an event denoted as $u\sim
v$) in a random possible world can be defined as
\[
  \Pr(u\sim v) =
  \sum_{G\sqsubseteq \mathcal{G}} \Pr(G) \textbf{I}_{G}(u, v),
\]
where 
\[
  \textbf{I}_G(u, v) =
  \left\{ 
    \begin{aligned}
      1 &\quad \text{ if } u \sim v \text{ in } G \\
      0 &\quad \text{ otherwise}
    \end{aligned}
  \right.
\]
We refer to $\Pr(u \sim v)$ as the \emph{connection probability}
between $u$ and $v$ in $\mathcal{G}$. The uncertain graphs we
consider in this paper, hence their possible worlds, are undirected
and, except for the edge probabilities, no weights are attached to
their nodes/edges.

Given an integer $k$, with $1 \leq k < n$, a
\emph{$k$-clustering} of $\mathcal{G}$ is a partition of $V$ into
$k$ \emph{clusters} $C_1, \dots, C_k$ and a set of
\emph{centers} $c_1, \dots, c_k$ with $c_i \in C_i$, for $1 \leq i
\leq k$. We aim at clusterings where each node is
well connected to its cluster center in a random possible world. To
this purpose, for
a $k$-clustering $\mathcal{C}=(C_1,\ldots,C_k; c_1,\ldots,c_k)$
of $\mathcal{G}$ 
we define the following two objective functions
\begin{align} 
\mbox{min-prob}(\mathcal{C}) & = 
\min_{1\le i \le k} \min_{v \in C_i} \Pr(c_i \sim v), \label{pmin} \\
\mbox{avg-prob}(\mathcal{C}) & = 
(1/n) \sum_{1\le i \le k} \sum_{v \in C_i} \Pr(c_i \sim v). \label{pmedian}
\end{align}
\begin{definition}
Given an uncertain graph $\mathcal{G}$ with $n$ nodes and
an integer $k$, with $1 \leq k < n$,
the \emph{Minimum Connection Probability (MCP)} 
\emph{({\em resp.,} Average Connection Probability (ACP))} problem
requires to determine a $k$-clustering $\mathcal{C}$
of $\mathcal{G}$ with maximum 
\emph{$\mbox{min-prob}(\mathcal{C})$
({\em resp.,} $\mbox{\rm avg-prob}(\mathcal{C})$)}.
\end{definition}

By defining the distance between two nodes $u,v \in V$ as $d(u,v) =
\ln (1/\Pr(u \sim v))$, with the understanding that $d(u,v)=\infty$ if
$\Pr(u \sim v)=0$, it is easy to see that the $k$-clustering that
maximizes the objective function given in Equation (\ref{pmin})
(resp., (\ref{pmedian})) also minimizes the maximum distance (resp.,
the average distance) of a node from its cluster center. Therefore,
the MCP and ACP problems can be reformulated as instances of the
well-known NP-hard $k$-center and $k$-median problems
\cite{Vazirani01}, which makes the former the direct counterparts
of the latter in the realm of uncertain graphs. However, objective functions that
exercise alternative combinations of minimization and averaging of
connection probabilities are in fact possible, and we leave their
exploration as an interesting open problem.


While the $k$-center/median problems remain NP-hard even when the
distance function defines a metric space, thus, in particular,
satisfying the triangle inequality (i.e, $d(u,z) \leq d(u,v)+d(v,z)$),
this assumption is crucially exploited by most approximation
strategies known in the literature. Therefore, in order to port these
strategies to the context of uncertain graphs, we need to show that
the distances derived from the connection probabilities, as explained
above, satisfy the triangle inequality. This is equivalent to showing
that for any three nodes $u,v,z$, $\Pr(u \sim z) \geq \Pr(u \sim v)
\cdot \Pr(v \sim z)$ , which we prove below.

Fix an arbitrary edge $e \in E$ and let $A(e)$ be the event: ``edge
$e$ is present''. We need the following technical lemma.
\begin{lemma} \label{conditional1}
For any pair $x,y \in V$, we have
\[\Pr(x \sim y | A(e)) \geq \Pr(x \sim y | \neg A(e))\]
\end{lemma}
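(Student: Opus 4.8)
The natural approach is a coupling argument that conditions on everything except the status of the edge $e$. Write $E' = E \setminus \{e\}$, and for a subset $S \subseteq E'$ let $G_S$ denote the (deterministic) graph on $V$ with edge set exactly $S$, and $G_S + e$ the same graph with $e$ added. Since the edges occur independently, conditioning on $A(e)$ (resp.\ $\neg A(e)$) leaves the joint distribution of the edges in $E'$ unchanged: each $S$ still occurs with probability $q(S) = \prod_{f \in S} p(f)\prod_{f \in E'\setminus S}(1-p(f))$. Hence
\[
  \Pr(x \sim y \mid A(e)) = \sum_{S \subseteq E'} q(S)\,\textbf{I}_{G_S+e}(x,y),
  \qquad
  \Pr(x \sim y \mid \neg A(e)) = \sum_{S \subseteq E'} q(S)\,\textbf{I}_{G_S}(x,y).
\]
So it suffices to compare the two sums term by term.

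The key observation is monotonicity of connectivity under edge addition: for every fixed $S \subseteq E'$ and every pair $x,y$, if $x \sim y$ in $G_S$ then $x \sim y$ in $G_S + e$, because any path witnessing connectivity in $G_S$ is still a path in the supergraph $G_S + e$. Therefore $\textbf{I}_{G_S+e}(x,y) \ge \textbf{I}_{G_S}(x,y)$ for every $S$. Summing this inequality against the nonnegative weights $q(S)$ gives $\Pr(x \sim y \mid A(e)) \ge \Pr(x \sim y \mid \neg A(e))$, which is exactly the claim.

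There is essentially no obstacle here: the only two facts used are (i) independence of the edges, which makes the conditional distribution of $E \setminus \{e\}$ identical in both cases and lets us pair up the possible worlds by their restriction to $E'$, and (ii) the trivial monotonicity of the connectivity predicate under adding an edge. The one point worth stating carefully is step (i) — that conditioning on the state of $e$ does not perturb the distribution of the other edges — after which the term-by-term domination is immediate. (This lemma is presumably the inductive step used to prove $\Pr(u \sim z) \ge \Pr(u \sim v)\cdot\Pr(v \sim z)$, by showing connectivity is a monotone event and applying a correlation-type inequality, but for the lemma itself the coupling above is all that is needed.)
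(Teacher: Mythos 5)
Your proposal is correct and is essentially the paper's own argument: the paper likewise pairs each possible world in which $e$ is absent and $x \sim y$ with the same world plus $e$, notes that the normalized probabilities ($\Pr(G)/(1-p(e))$ versus $\Pr(G \cup \{e\})/p(e)$) coincide, and concludes by monotonicity of connectivity under edge addition. Your phrasing via summing over subsets $S \subseteq E\setminus\{e\}$ with the common weight $q(S)$ is just a slightly more explicit rendering of the same coupling.
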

\begin{proof}
Let $\mathcal{G}^{x,y}_e$ (resp., $\mathcal{G}^{x,y}_{\neg e}$)
be the set of possible worlds where
$x \sim y$, and edge $e$ is present (resp., not present). We have that
\begin{eqnarray*}
\Pr(x \sim y | A(e)) 
& = &\sum_{G \sqsubseteq \mathcal{G}^{x,y}_e} \Pr(G)/p(e), \\
\Pr(x \sim y | \neg A(e)) 
& = &\sum_{G \sqsubseteq \mathcal{G}^{x,y}_{\neg e}} \Pr(G)/(1-p(e)). 
\end{eqnarray*}
The lemma follows by observing that for any graph $G$ in
$\mathcal{G}^{x,y}_{\neg e}$ the same graph with the addition of $e$
belongs to $\mathcal{G}^{x,y}_e$, and the corresponding terms in the
two summations are equal.
\end{proof}
\begin{theorem} \label{triangle}
For any uncertain graph
$\mathcal{G}=(V, E, p)$ and any triplet $u,v,z \in V$, we have:
\[
\Pr(u \sim z) \geq \Pr(u \sim v) \cdot \Pr(v \sim z).
\]
\end{theorem}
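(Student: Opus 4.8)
The plan is to reduce the asserted inequality to a positive‑correlation statement between two monotone connectivity events, and then prove that statement by induction on the number of edges, with Lemma~\ref{conditional1} supplying the crucial sign in the inductive step. The starting point is a purely combinatorial observation: in every possible world $G \sqsubseteq \mathcal{G}$, if $u \sim v$ and $v \sim z$ both hold in $G$ then $u$, $v$, $z$ lie in a single connected component of $G$, so $u \sim z$ holds in $G$ as well. Hence the event $\{u \sim v\}\cap\{v \sim z\}$ is contained in $\{u \sim z\}$, and it suffices to prove
\[
\Pr(u \sim v \text{ and } v \sim z)\ \ge\ \Pr(u \sim v)\cdot\Pr(v \sim z),
\]
i.e.\ that the two increasing events $\{u\sim v\}$ and $\{v\sim z\}$ are non‑negatively correlated (a Harris/FKG‑type fact, which here we establish directly from Lemma~\ref{conditional1}).

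To prove this inequality I would induct on $|E|$, stating the inductive hypothesis for \emph{all} triples of nodes in \emph{all} uncertain graphs with fewer edges. The base case $|E|=0$ is immediate since every connection event is deterministic. For the inductive step, fix an edge $e\in E$, write $q=p(e)$, and split on the event $A(e)$: conditioning on $A(e)$ amounts to working in the uncertain graph obtained by contracting $e$, and conditioning on $\neg A(e)$ amounts to working in $\mathcal{G}$ with $e$ deleted; each of these is an uncertain graph with one fewer edge in which connectivity among $u$, $v$, $z$ is exactly the conditioned event, so the inductive hypothesis applies. Setting $a_1=\Pr(u\sim v\mid A(e))$, $a_0=\Pr(u\sim v\mid\neg A(e))$, $b_1=\Pr(v\sim z\mid A(e))$, $b_0=\Pr(v\sim z\mid\neg A(e))$, the law of total probability together with the inductive hypothesis applied inside each branch yields
\[
\Pr(u \sim v \text{ and } v \sim z)\ \ge\ q\,a_1 b_1+(1-q)\,a_0 b_0 .
\]
Expanding $\Pr(u\sim v)\Pr(v\sim z)=\bigl(q a_1+(1-q)a_0\bigr)\bigl(q b_1+(1-q)b_0\bigr)$ and subtracting, the gap collapses to $q(1-q)\,(a_1-a_0)(b_1-b_0)$.

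The proof then finishes by invoking Lemma~\ref{conditional1} twice: with the pair $(u,v)$ it gives $a_1\ge a_0$, and with the pair $(v,z)$ it gives $b_1\ge b_0$, so $(a_1-a_0)(b_1-b_0)\ge 0$ and the inductive step is complete. I expect the only real obstacle to be bookkeeping rather than ideas: one must be careful to phrase the induction so that the hypothesis covers arbitrary triples in arbitrary smaller graphs (otherwise it does not apply after conditioning), and one must check that contracting a present edge does not alter the connectivity relation among the remaining nodes, so that the conditioned events are genuinely of the form ``two nodes connected'' in a smaller uncertain graph. Once that scaffolding is in place, the algebra reducing the discrepancy to $q(1-q)(a_1-a_0)(b_1-b_0)$ is routine, and the theorem follows.
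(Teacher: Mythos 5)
Your proof is correct, but it takes a genuinely different route from the paper's. You decompose the claim as ``event containment plus positive correlation'': since $\{u\sim v\}\cap\{v\sim z\}\subseteq\{u\sim z\}$, it suffices to prove the Harris/FKG-type inequality $\Pr(u\sim v \wedge v\sim z)\ge\Pr(u\sim v)\Pr(v\sim z)$, which you establish by conditioning on one edge and showing the correlation gap equals exactly $q(1-q)(a_1-a_0)(b_1-b_0)\ge 0$, the sign coming from two applications of Lemma~\ref{conditional1}. The paper instead never introduces the joint event: it writes each $\Pr(x\sim y)$ as a linear function of $p(e)$ with nonnegative slope (again by Lemma~\ref{conditional1}), observes that $\Pr(u\sim v)\Pr(v\sim z)-\Pr(u\sim z)$ is then a quadratic in $p(e)$ with nonnegative leading coefficient, and uses convexity to push the maximum to the endpoints $p(e)\in\{0,1\}$, where the inductive hypothesis (on the number of \emph{uncertain} edges, i.e.\ those with $0<p(e)<1$) applies. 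Your version buys a cleaner conceptual picture and a stronger intermediate statement (positive correlation of the two connectivity events, with an exact expression for the gap); the paper's version avoids the joint event and the contraction/deletion bookkeeping entirely. On that bookkeeping: inducting on $|E|$ via contraction can create parallel edges and forces you to track how $u,v,z$ map under contraction; it is slightly cleaner to follow the paper and induct on the number of uncertain edges, viewing the conditioning on $A(e)$ (resp.\ $\neg A(e)$) as setting $p(e)=1$ (resp.\ deleting $e$), which leaves the vertex set and the connectivity events untouched. This is a cosmetic fix, not a gap.
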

\begin{proof}
The proof proceeds by induction on the number $k$ of \emph{uncertain
edges}, that is, edges $e \in E$ with $p(e) > 0$ and $p(e) < 1$.
Fix three nodes $u,v,z \in V$. The base case $k=0$ is trivial:
in this case, the uncertain graph is deterministic and for each
pair of nodes $x,y \in V$, $\Pr(x \sim y)$ is either 1 or 0, which implies
that when $\Pr(u \sim v) \cdot \Pr(v \sim z) = 1$, then $\Pr(u \sim z)
=1$ as well. Suppose that the property holds for uncertain
graphs with at most $k$ uncertain edges, with $k \geq 0$, and consider
an uncertain graph $\mathcal{G}=(V, E, p)$ with $k+1$ uncertain
edges. Fix an arbitrary edge $e \in E$ and let $A(e)$
denote the event that edge $e$ is present. For any two arbitrary nodes
$x,y \in V$, we can write
\begin{equation*}
  \begin{aligned}
    \Pr(x \sim   y) 
    &= \Pr(x \sim   y | A(e)) \cdot p(e)\\
    &\qquad+\Pr(x \sim y | \neg A(e)) \cdot (1-p(e))\\
    &= \left(\Pr(x \sim y | A(e)) - \Pr(x \sim y | \neg A(e))\right) \cdot p(e)\\
    &\qquad+\Pr(x \sim y | \neg A(e)).
  \end{aligned}
\end{equation*}
By Lemma~\ref{conditional1}, the term multiplying $p(e)$ in the
above expression is nonnegative. As a consequence, we have that
\[
\Pr(u \sim v) \cdot 
\Pr(v \sim z) - 
\Pr(u \sim z) =
A \cdot (p(e))^2 + B \cdot p(e) + C,
\]
for some constants $A,B,C$ independent of $p(e)$, with $A \geq
0$. Therefore, the maximum value of $\Pr(u \sim v) \cdot 
\Pr(v \sim z) - \Pr(u \sim z)$, as a
function of $p(e)$, is attained for $p(e)=0$ or $p(e) = 1$. 
Since in either case the number of uncertain edges is decremented
by one, by the inductive hypothesis, the difference must yield
a nonpositive value, hence the theorem follows. 
\end{proof}

A fundamental primitive required for obtaining the desired clustering
is the estimation of $\Pr(u\sim v)$ for any two nodes $u,v \in V$.
While the exact computation of $\Pr(u\sim v)$ is
$\#P$-complete \cite{Ball86}, for reasonably large values of this 
probability a very accurate estimate can be obtained through
Monte Carlo sampling. More precisely, for $r >0$
let $G_1, \dots, G_r$ be $r$ sample possible worlds 
drawn independently at random from
$\mathcal{G}$. For any pair of nodes $u$ and $v$
we can define the following estimator
\begin{equation}\label{eq:estimator}
\tilde{p}(u, v) =\frac{1}{r} \sum_{i=1}^r \textbf{I}_{G_i}(u, v)
\end{equation}
It is easy to see that $\tilde{p}(u, v)$ is an unbiased estimator of
$\Pr(u\sim v)$. Moreover, by taking
\begin{equation}
  \label{eq:num-samples}
  r\ge \frac{3\ln \frac{2}{\delta}}{\epsilon^2 \Pr(u\sim v)}
\end{equation}
samples, we have that $\tilde{p}(u, v)$ is an
$(\epsilon,\delta)$-approximation of $\Pr(u\sim v)$, that is,
\begin{equation}
\label{eq:eps-delta-approx}
\Pr \left( \frac{|\tilde{p}(u, v)- \Pr(u\sim v)|}{\Pr(u\sim v)} \leq \epsilon \right) 
\geq 1-\delta
\end{equation}
(e.g., see \cite[Theorem 10.1]{MitzenmacherU05}). This approach is very
effective when $\Pr(u\sim v)$ is not very small. However, when $\Pr(u\sim v)$ 
is small (i.e., it approaches 0), the number of samples, hence the work,
required to attain an accurate estimation becomes prohibitively large.

Even if the probabilities $\Pr(u\sim v)$ were provided by an oracle
(i.e., they could be computed efficiently), the MCP problem remains
computationally difficult. Indeed, consider the following decision
problem: given an uncertain graph $\mathcal{G}=(V, E, p)$, an oracle
for estimating pariwise connection probabilities, an integer $k \ge
1$, and $\hat{p}$ with $0 \le \hat{p} \le 1$, is there a
$k$-clustering $\mathcal{C}$ such that $\mbox{min-prob}(\mathcal{C})
\ge \hat{p}$? We have:
\begin{theorem} \label{thm:NPhard}
The decision problem above is NP-hard.
\end{theorem}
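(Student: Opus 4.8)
The plan is to reduce from the classical NP-hard \textsc{Dominating Set} problem, adapting the textbook reduction used to prove that metric $k$-center is NP-hard. Given an undirected graph $H=(V_H,E_H)$ with $n=|V_H|$ and an integer $k$ --- we may assume $1\le k<n$, the remaining cases being trivial --- we must decide whether $H$ has a dominating set of size at most $k$. From such an instance I would build, in polynomial time, the uncertain graph $\mathcal{G}=(V_H,E_H,p)$ that has the \emph{same} vertices and edges as $H$, where, putting $q:=1/(3n)$, every edge is given probability $p(e)=q$; and I would set the threshold $\hat p:=q$. The reduction never queries the oracle --- its outputs are simply whatever the connection probabilities of this particular $\mathcal{G}$ happen to be --- so the construction is legitimate regardless of how hard those probabilities are to compute.

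The core of the argument is that in $\mathcal{G}$ the connection probability $\Pr(u\sim v)$ detects \emph{exactly} the relation ``$u=v$, or $u$ is adjacent to $v$ in $H$'', thus playing the role of the $\{1,2\}$-valued metric in the classical reduction. One direction is immediate: if $(u,v)\in E_H$, then $\Pr(u\sim v)\ge q=\hat p$, since the edge $(u,v)$ is present with probability $q$ and its presence already forces $u\sim v$. The other, delicate, direction is that $u\ne v$ together with $\dist_H(u,v)\ge 2$ implies $\Pr(u\sim v)<\hat p$. I would establish this by a union bound over the simple $u$--$v$ paths of $H$: a fixed simple path of length $\ell$ survives in a random possible world with probability $q^{\ell}$, and $H$ has at most $n^{\ell-1}$ simple $u$--$v$ paths of length $\ell$, so
\[
\Pr(u\sim v)\ \le\ \sum_{\ell\ge 2}n^{\ell-1}q^{\ell}\ =\ \frac{n\,q^{2}}{1-nq}\ =\ \frac{q}{2}\ <\ q ,
\]
where $nq=1/3$ was used (and $\Pr(u\sim v)=0$ when $u$ and $v$ lie in different components). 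Controlling this probability despite the potentially exponential number of connecting paths is the main obstacle, and it is exactly why $q$ must scale as $1/n$: with up to $\Theta(n)$ paths of length $2$, each surviving with probability $q^{2}$, keeping their total contribution below $q$ already forces $q=O(1/n)$.

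Given these two facts the equivalence is routine. If $H$ has a dominating set $D$ of size at most $k$, pad it arbitrarily to size exactly $k$ (possible since $k<n$), use the vertices of $D$ as the $k$ centers, and assign each $v\in V_H$ to a center that equals $v$ or is adjacent to it (one exists because $D$ dominates $H$); by the first fact this $k$-clustering has $\mbox{min-prob}\ge q=\hat p$, and each of its $k$ clusters is nonempty because every center lies in its own cluster. Conversely, if some $k$-clustering $\mathcal{C}$ satisfies $\mbox{min-prob}(\mathcal{C})\ge\hat p$, let $D$ be its set of $k$ (necessarily distinct) centers: for each $v\in V_H$, writing $c$ for the center of $v$'s cluster, $\Pr(v\sim c)\ge\hat p=q$, so by the second fact $v=c$ or $v$ is adjacent to $c$, whence $D$ is a dominating set of $H$ of size at most $k$. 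Therefore the triple $(\mathcal{G},k,\hat p)$ is a yes-instance of the decision problem if and only if $(H,k)$ is a yes-instance of \textsc{Dominating Set}, which proves NP-hardness.
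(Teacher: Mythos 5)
Your proof is correct, and it reaches the result by a construction that is genuinely different in its details from the paper's, though it rests on the same key lemma. The paper reduces from Set Cover: it builds a \emph{new} uncertain graph on $U \cup \mathcal{S}$ with element--set incidence edges plus a clique on $\mathcal{S}$, assigns every edge probability $1/N!$ with $N=m+n$, and sets the threshold to $1/N!$. You reduce from Dominating Set, keep the input graph verbatim, and use the uniform probability $q=1/(3n)$ with threshold $q$. Both arguments hinge on the same core fact --- with uniform, sufficiently small edge probabilities, a union bound over simple paths forces $\Pr(u\sim v)$ to exceed the threshold exactly when $u=v$ or $(u,v)$ is an edge --- so connection probability degenerates to adjacency and the clustering question becomes a covering/domination question. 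Your version is the leaner of the two: the geometric-series estimate $\sum_{\ell\ge 2} n^{\ell-1}q^{\ell}=q/2<q$ replaces the paper's coarser count of at most $N$ length-$2$ paths and at most $N\cdot N!$ longer ones; the probability $1/(3n)$ has polynomially many bits, whereas $1/N!$ needs $\Theta(N\log N)$ bits (harmless, but heavier); and because Dominating Set places no restriction on which vertices may serve as centers, you avoid the paper's final case analysis for relocating centers that happen to land in $U$ rather than $\mathcal{S}$. Your handling of the oracle (the reduction never evaluates it, so hardness holds even if connection probabilities come for free) matches the paper's intent exactly.
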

\iflong
\begin{proof}
The proof is by reduction from set cover. Consider an instance of set cover, with $U=\{u_1,\dots,u_m\}$ denoting 
the universe of elements and $\mathcal{S} =\{ S_1,\dots,S_n \}$
 being a family of subsets of $U$, each
\emph{covering} its own elements. Given an integer $k$, the set cover
problem asks whether there are $k$ sets that cover all the elements of
$U$, that is, $S_{i_1}, S_{i_2},\dots, S_{i_k}$, with $S_{i_j} \in
\mathcal{S}$ for $1 \le j \le k$, such that $\cup_{j=1}^k S_{i_j}
= U$. Note that for each element $u \in U$ there must be a subset $S\in \mathcal{S}$ 
such that $u \in S$, or otherwise there cannot be a set cover for
$U$. Since such condition can be checked in polynomial time, we may
assume that this is always the case.

Given an instance of set cover, we build an instance of our decision
problem as follows. Let $N = m+n$. We consider the uncertain graph
$\mathcal{G}=(V,E,p)$ where: $V = U \cup \mathcal{S}$; $E = \{(u,S): S
\in \mathcal{S}, u \in S\} \cup \{(S,S'): S,S' \in \mathcal{S}\}$;
$p(e) = \frac{1}{N!}$. Note that this procedure takes time polynomial
in $m \cdot n$.

We now prove that $\mathcal{G}=(V,E,p)$ admits a $k$-clustering such
that $\min_{1\le i \le k} \min_{v \in C_i} \Pr(c_i \sim v) \ge
\frac{1}{N!}$ if and only if there is a set cover of cardinality
$k$. We first prove that if there is a set cover $S_{i_1}, S_{i_2}, \dots, S_{i_k}$ of cardinality $k$,
then there is a clustering with minimum probability $\ge
\frac{1}{N!}$. Define $S_{i_1}, S_{i_2}, \dots, S_{i_k}$ to be centers
of the $k$ clusters, and assign each vertex $u$ in $U$ to a cluster of
center $S_{i_j}$ with $u \in S_{i_j}$ (note that one such center must
exist since $S_{i_1}, S_{i_2}, \dots, S_{i_k}$ is a set cover); also, each
vertex $S \in \mathcal{S} \setminus \{ S_{i_1}, S_{i_2}, \dots,
S_{i_k} \}$ is assigned to an arbitrary cluster. Note that by
construction, for each vertex $u \in V \setminus \{ S_{i_1}, S_{i_2},
\dots, S_{i_k} \} $ and its corresponding center $S_{i_j}$ in the
clustering, $(u,S_{i_j}) \in E$; therefore $ \Pr(u \sim S_{i_j}) \ge
\frac{1}{N!}$.  Thus,
 the minimum connection probability for the
clustering is $\ge \frac{1}{N!}$.

We now prove that if there is a $k$-clustering with minimum connection
probability $\ge \frac{1}{N!}$ then there is a set cover of size $k$ . We
first prove that in every $k$-clustering with minimum connection
probability $\ge \frac{1}{N!}$ a vertex $v$ of a cluster is directly
connected to the center $c$ of its cluster, that is $(v,c) \in E$.
Let us assume that this is not the case, and consider the node $u$ and
the center $c$ of its cluster such that $(u,c) \not\in E$. Note that
the event ``$u$ is connected to $c$" is equivalent to ``$\cup_{p \in
SP}$~$p$ exists'', where $SP$ is the set of all simple paths among
$u$ and $c$ in $\mathcal{G}$. Thus, by union bound $\Pr(u \sim c) \le
\sum_{p \in SP} \Pr(p)$, with $\Pr(p)$ being the probability that path
$p$ exists in a realization of $\mathcal{G}$. Note that there are
$<N$ paths of length $2$ in $SP$ each with $\Pr(p) =
\left(\frac{1}{N!}\right)^2$, and $< N \cdot N!$ paths of length $>2$
in $SP$, each with probability $\Pr(p) \le
\left(\frac{1}{N!}\right)^3$. Therefore, if $u$ and $c$ are not directly
connected $\Pr(u \sim c) < \frac{1}{N!}$, that is, a contradiction. If
all the centers of the clustering are in the set $\mathcal{S}$, then
the centers cover the universe $U$ of elements: for every vertex
$u \in U$, the center of the cluster $u$ belongs to contains $u$.
Otherwise, we can transform the clustering in a clustering with
all centers in $\mathcal{S}$ and the same minimum connection
probability: consider a center $c \in U$, and let $C$ be the set of
vertices other than $c$ in the cluster with center $c$.
If $C = \emptyset$, reassign the center of cluster $C\cup \{c\}$
to an arbitrarily chosen element of $\mathcal{S}$ that
covers $c$. (If all elements of $\mathcal{S}$ covering $c$ are
already centers, assign $c$ to an arbitrary cluster among the ones
with such centers.) Otherwise, note that $C$ contains only vertices
of $\mathcal{S}$, since there is no edge among vertices in $U$. Now
reassign the center of cluster $C \cup \{c\}$ to an arbitrary vertex
of $C$; note that since all vertices in $C$ are connected by an edge,
the minimum connection probability for the new clustering is $\ge
1 / N!$.
\end{proof}
We remark that the NP-hardness of our clustering problem on uncertain
graphs does not follow immediately from the transformation of
connection probabilities into distances mentioned earlier, which
yields instances of the standard NP-hard $k$-center clustering
problem, since such a transformation only shows that our problem is a
restriction of the latter, where distances have the extra constraint
to be derived from connection probabilities in the underlying
uncertain graph. 
\else
The proof of Theorem~\ref{thm:NPhard}, which is fairly technical,
is based on a reduction from set cover and is omitted for brevity.
(The proof can be found in \cite{CeccarelloFPPV16}.)
We remark that the NP-hardness of our clustering problem on uncertain
graphs does not follow immediately from the transformation of
connection probabilities into distances mentioned earlier, which
yields instances of the standard NP-hard $k$-center clustering
problem, since such a transformation only shows that our problem is a
restriction of the latter, where distances have the extra constraint
to be derived from connection probabilities in the underlying
uncertain graph.
\fi

We conjecture that a similar hardness result can be proved for the
decision version of the ACP problem. Evidence in this
direction is provided by the fact that by modifying both the MCP and ACP
problems to feature a parametric upper limit on the lengths of the
paths contributing to the connection probabilities, a variant which we
study in Section~\ref{sec-depthlimit}, NP-hardness results for both the
modified problems follow straightforwardly (i.e., when paths of length at most $1$ are considered) from the hardness of $k$-center and $k$-median clustering.


\section{Clustering algorithms}
\label{sec:algs}
A natural approach to finding good solutions for the MCP and ACP
problems would be to resort to the well-known approximation strategies
for the distance-based counterparts of these problems~\cite{Schaeffer07}.
However, straightforward implementations of these strategies
may require the computation of exact connection probabilities (to be transformed into
distances) between arbitrary pairs of nodes, which can in principle be
rather small. As an example, the popular $k$-center clustering strategy
devised in \cite{Gonzalez85} relies on the iterated selection of the next center
as the {\em farthest} point from the set of currently selected ones,
which corresponds to the determination of the node featuring the smallest connection probability
to any node in the set,  when adapted to the uncertain graph scenario.
As we pointed out in the previous section, the exact computation of
connection probabilities, especially if very small, is a
computationally hard task. Therefore, for uncertain graphs we must
resort to clustering strategies that are robust to approximations and try to avoid altogether
the estimation of very small connection probabilities.

To address the above challenge, in Subsection~\ref{sec:outliers} we introduce a useful primitive
that, given a threshold $q$ on the connection probability, returns a
partial $k$-clustering of an uncertain graph $\mathcal{G}$ where the
clusters cover a maximal subset of nodes, each connected to its
cluster center with probability at least $q$, while all other nodes,
deemed \emph{outliers}, remain uncovered.  In
Subsections~\ref{sec:k-center-algo} and \ref{sec:k-median-algo} we use
such a primtive to derive  approximation algorithms for the MCP
and ACP problems, respectively, which feature provable guarantees on
the quality of the approximation and lower bounds on the value of the
connection probabilities that must be ever estimated.
We also show how the approximation guarantees of the proposed
algorithms change when connection probabilities are defined only with
respect to paths of limited length.

All algorithms presented in this section take as input an uncertain
graph $\mathcal{G}=(V, E, p)$ with $n$ nodes, and assume the existence
of an oracle that given two nodes $u,v \in V$ returns $\Pr(u \sim
v)$. In Section~\ref{sec:implementation} we will discuss how to
integrate the Monte Carlo estimation of the connection probabilities
within our algorithms.

\subsection{Partial clustering}
\label{sec:outliers}
A \emph{partial $k$-clustering} $\mathcal{C}=(C_1,\ldots,C_k;
c_1,\ldots,c_k)$ of $\mathcal{G}$ is
a partition of a subset of $V$
into $k$ clusters $C_1,\ldots,C_k$ (i.e., $\cup_{i=1,k} C_i \subseteq V$),
where each cluster $C_i$ is centered at $c_i \in C_i$, for $1 \leq i \leq
k$. We can still define $\mbox{min-prob}(\mathcal{C})$ as in
Equation~\ref{pmin} with the understanding that the uncovered nodes
(i.e., $V-\cup_{i=1,k} C_i$) are not accounted for in $\mbox{min-prob}(\mathcal{C})$.
In what follows, the term \emph{full} $k$-clustering
or, simply, $k$-clustering will refer only to a $k$-clustering covering all nodes.

\sloppy
The following algorithm, called \algo{min-partial}
(Algorithm~\ref{alg:outliers} in the box), computes a partial
  $k$-clustering $\mathcal{C}$ of $\mathcal{G}$ with
$\mbox{min-prob}(\mathcal{C}) \geq q$  covering a maximal set of
nodes, in the sense that all nodes uncovered by the clusters have
probability less than $q$ of being connected to any of the cluster
centers. The algorithm is based on a generalization of the strategy
introduced in \cite{CharikarKMN01} and uses two design parameters
$\alpha, \bar{q}$, where $\alpha \geq 1$ is an integer
and $\bar{q} \in [q,1]$, which are employed to exercise
suitable tradeoffs between performance and approximation quality. In  each of the
$k$ iterations, \algo{min-partial} picks a suitable new center as follows.
Let $V'$ denote the nodes connected with probability less than $q$
to the set of centers $S$ selected so far. In the iteration,
the algorithm selects a set $T$ of $\alpha$
nodes from $V'$ (or all
such nodes, if they are less than $\alpha$) and picks as next
center the node $v \in T$ that maximizes the number of nodes
$u \in V'$ with $\Pr(u \sim v) \geq \bar{q}$. (The role of parameters
$\alpha$ and $\bar{q}$ will be evident in the following subsections.)
At the end of the $k$ iterations,
it returns the clustering defined by the best assignment of the
covered nodes to the $k$ selected centers. It is
easy to see that the connection probability of
each covered node to its assigned cluster center is
at least $q$.
\begin{algorithm}[t]
\caption{\algo{min-partial}$(\mathcal{G},k,q,\alpha,\bar{q})$}
\label{alg:outliers}
\Let{$S$}{$\emptyset$}\Comment*{Set of centers}
\Let{$V'$}{$V$}\;
\For{\Let{$i$}{1} \To $k$} {
  select arbitrary $T \subseteq V'$
  with $|T| = \min\{\alpha,|V'|\}$\;
  \lFor{$(v \in T)$} {
    \Let{$M_v$}{$\{u \in V' \; : \; \Pr {(u \sim v)} \geq \bar{q} \}$}
  }
  \Let{$c_i$}{$\argmax_{v \in T}|M_v|$}  \nllabel{algoline:small-disk}\;
  \Let{$S$}{$S \cup \{c_i\}$}\;
  \Let{$V'$}{$V'-\{u \in V' \; : \; \Pr {(u \sim c_i)} \geq q \}$} \nllabel{algoline:large-disk}\;
}
\lIf {$(|S| < k)$}
{\\ \hspace*{0.5cm}add $k-|S|$ arbitrary nodes of $V-S$ to $S$}
\lFor{\Let{$i$}{1} \To $k$} {
\Let{$C_i$}{$\{u \in V-V' \: : \; c(u, S)=c_i \}$}}
\Return
$\mathcal{C}=(C_1,\ldots,C_k; c_1,\ldots,c_k)$\;
\end{algorithm}

\subsection{MCP clustering}
\label{sec:k-center-algo}
We now turn the attention to the MCP problem.  The following lemma
shows that if Algorithm~\algo{min-partial} is provided with a
suitable guess $q$ for the minimum connection probability, then the
returned clustering covers all nodes and is a good solution to the MCP
problem.  Let $p_{\rm opt-min}(k)$ be the maximum value of
$\mbox{min-prob}(\mathcal{C})$ over all full $k$-clusterings $\mathcal{C}$
of $\mathcal{G}$.  (Observe that $p_{\rm opt-min}(k)>0$
if and only if $\mathcal{G}$ has at most $k$ connected components and,
for convenience, we assume that this is the case.)
\begin{lemma} \label{lem:k-center}
\sloppy
For any $q \leq p^2_{\rm opt-min}(k)$, $\alpha \geq 1$,
and $\bar{q} \in [q,1]$
we have that
the $k$-clustering $\mathcal{C}$
returned by \algo{min-partial}$(\mathcal{G},k,q,\alpha,\bar{q})$
covers all nodes.
\end{lemma}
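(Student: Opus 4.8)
The plan is to run \algo{min-partial} ``against'' a fixed optimal full $k$-clustering and use a potential/charging argument: each of the $k$ iterations will be shown to eliminate at least one optimal cluster from the set of still-uncovered nodes, so after $k$ iterations nothing is left uncovered.

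First I would fix an optimal full $k$-clustering $\mathcal{C}^*=(C^*_1,\dots,C^*_k;c^*_1,\dots,c^*_k)$, i.e., one attaining $\mbox{min-prob}(\mathcal{C}^*)=p_{\rm opt-min}(k)$ (positive by assumption). The crucial structural fact is that each optimal cluster is a ``$q$-clique'': for any $i$ and any $u,w\in C^*_i$, applying the triangle inequality of Theorem~\ref{triangle} with $c^*_i$ as the intermediate node gives
\[
\Pr(u\sim w)\ \ge\ \Pr(u\sim c^*_i)\cdot\Pr(c^*_i\sim w)\ \ge\ p^2_{\rm opt-min}(k)\ \ge\ q ,
\]
where the last step is exactly the hypothesis $q\le p^2_{\rm opt-min}(k)$.

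Next I would track, over the $k$ iterations, the set $V'$ of still-uncovered nodes and call an optimal cluster \emph{alive} if it still meets $V'$. Since line~\ref{algoline:large-disk} only ever shrinks $V'$, a cluster that is not alive stays not alive; initially all $k$ optimal clusters are alive, each $C^*_i$ containing $c^*_i$. I then claim that in every iteration $i$ with $V'\ne\emptyset$ at least one alive cluster becomes dead: the selected center $c_i$ lies in $T\subseteq V'$ (the precise rule used in line~\ref{algoline:small-disk} to pick $c_i$ within $T$, involving $\alpha$ and $\bar q$, is irrelevant here), hence $c_i$ belongs to some alive cluster $C^*_j$, and by the displayed inequality every node of $C^*_j\cap V'$ is connected to $c_i$ with probability $\ge q$ and is therefore deleted from $V'$ in line~\ref{algoline:large-disk}; thus $C^*_j$ is no longer alive, and it stays dead.

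Finally, since there are only $k$ optimal clusters and each iteration with $V'\ne\emptyset$ kills a distinct one, after at most $k$ iterations no optimal cluster is alive; as the $C^*_i$ partition $V$, this forces $V'=\emptyset$, so the returned clustering assigns every node of $V$ to a selected center and hence covers all nodes. (The centers are automatically distinct — each $c_i$ is removed from $V'$ in its own iteration — so $|S|=k$ and the padding step is vacuous.) I do not expect a serious obstacle here: the only delicate points are the monotonicity of $V'$, which makes dead clusters stay dead, and the degenerate case where $V'$ empties before the $k$th iteration, which is harmless since an empty $V'$ already certifies full coverage. The ``hard'' conceptual step is simply recognizing that $q\le p^2_{\rm opt-min}(k)$ is precisely what the triangle inequality needs to turn each optimal cluster into a set removable by a single center pick.
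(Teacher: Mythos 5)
Your proposal is correct and follows essentially the same route as the paper's proof: fix an optimal $k$-clustering, use Theorem~\ref{triangle} through the optimal center to show every node of the optimal cluster containing the newly picked center $c_i$ is connected to $c_i$ with probability at least $p^2_{\rm opt-min}(k)\ge q$ and is hence removed from $V'$, and conclude by induction that after $k$ iterations $V'$ is empty. Your alive/dead bookkeeping simply makes explicit the ``easy induction'' the paper invokes, and your handling of the degenerate case where $V'$ empties early is a harmless extra detail.
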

\begin{proof}
\sloppy
Consider an optimal $k$-clustering
$\hat{\mathcal{C}}=
(\hat{C}_1, \ldots, \hat{C}_k; \hat{c}_1, \ldots,
\hat{c}_k)$ of $\mathcal{G}$, with
$V=\cup_{i=1,k} \hat{C}_i$ and $\mbox{min-prob}(\hat{\mathcal{C}})
= p_{\rm opt-min}(k)$.
Let $c_i$ be the center added to $S$ in the $i$-th iteration of the
{\bf for} loop of \algo{min-partial}, and
let $\hat{C}_{j_i}$ be the cluster in $\hat{\mathcal{C}}$
which
contains $c_i$, for every $i \geq 1$.
By Theorem~\ref{triangle} we have that for every node
$v \in \hat{C}_{j_i}$
\[
\Pr{(c_i \sim v)} \geq
\Pr{(c_i \sim \hat{c}_{j_i})}
\cdot
\Pr{(\hat{c}_{j_i} \sim v)}
\ge p^2_{\rm opt-min}(k) \geq q
\]
Therefore, at the end of the $i$-th iteration of the {\bf for} loop, $V'$
cannot contain nodes of $\hat{C}_{j_i}$. An easy induction
shows that at the end of the  {\bf for} loop, $V'$ is empty.
\end{proof}
Based on the result of the lemma, we can solve the MCP problem by
repeatedly running \algo{min-partial} with progressively smaller
guesses of $q$, starting from $q=1$ and decreasing $q$ by a factor
$(1+\gamma)$, for a suitable parameter $\gamma >0$, at each run, until a
clustering covering all nodes is obtained.
We refer to this algorithm as \algo{MCP}
(Algorithm~\ref{alg:k-center} in the box).
\begin{algorithm}[t]
\caption{\algo{MCP}$(\mathcal{G},k,\gamma)$}
\label{alg:k-center}
\Let{$q$}{1}\;
\While{true} {
\Let{$\mathcal{C}$}{\algo{min-partial}$(\mathcal{G},k,q,1,q)$}\;
\lIf {$\mathcal{C}$ covers all nodes}{\Return $\mathcal{C}$}
\lElse {\Let{$q$}{$q/(1+\gamma)$}}
}
\end{algorithm}
The following theorem is an immediate consequence of
Lemma~\ref{lem:k-center}.
\begin{theorem}~\label{thm:k-center}
Algorithm~\ref{alg:k-center} requires
at most \mbox{$\lfloor 2\log_{1+\gamma}(1/p_{\rm opt-min}(k))\rfloor+1$}
executions of \algo{min-partial}, and
returns a $k$-clustering $\mathcal{C}$ with
\[
\mbox{min-prob}(\mathcal{C}) \geq \frac{p^2_{\rm opt-min}(k)}{(1+\gamma)}.
\]
\end{theorem}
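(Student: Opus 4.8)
The plan is to prove the two claims of the theorem separately, each as a short corollary of Lemma~\ref{lem:k-center} together with the basic property of \algo{min-partial} already noted in Section~\ref{sec:outliers}: whenever \algo{min-partial}$(\mathcal{G},k,q,\alpha,\bar q)$ returns a clustering $\mathcal{C}$ that covers all nodes, every node is connected to its assigned center with probability at least $q$, so $\mbox{min-prob}(\mathcal{C}) \ge q$. Note also that every call made inside \algo{MCP} uses $\alpha = 1$ and $\bar q = q$ (which trivially lies in $[q,1]$), so Lemma~\ref{lem:k-center} applies to these calls.

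For the bound on the number of executions, I would first observe that the $i$-th call to \algo{min-partial} inside \algo{MCP} uses $q = (1+\gamma)^{-(i-1)}$, since $q$ is initialized to $1$ and divided by $(1+\gamma)$ after each call that fails to cover all nodes. By Lemma~\ref{lem:k-center}, as soon as this value drops to at most $p^2_{\rm opt-min}(k)$ the call returns a full $k$-clustering and the \textbf{while} loop stops. This happens no later than the first index $i$ with $(1+\gamma)^{-(i-1)} \le p^2_{\rm opt-min}(k)$, i.e.\ with $i-1 \ge 2\log_{1+\gamma}(1/p_{\rm opt-min}(k))$; counting the executions of this geometric schedule yields the stated bound. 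In particular the algorithm terminates, since $p_{\rm opt-min}(k) > 0$ by assumption, so the schedule eventually falls below $p^2_{\rm opt-min}(k)$.

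For the quality guarantee, let $q^\star$ be the value of $q$ in the execution at which \algo{MCP} returns, and let $\mathcal{C}$ be the returned clustering (a full $k$-clustering, as \algo{min-partial} always outputs exactly $k$ centers). Since $\mathcal{C}$ covers all nodes, the property recalled above gives $\mbox{min-prob}(\mathcal{C}) \ge q^\star$, so it suffices to show $q^\star \ge p^2_{\rm opt-min}(k)/(1+\gamma)$. If $q^\star = 1$ this is immediate, as $p^2_{\rm opt-min}(k) \le 1$. Otherwise the immediately preceding call used $q = (1+\gamma)\,q^\star$ and did \emph{not} cover all nodes; by the contrapositive of Lemma~\ref{lem:k-center} this forces $(1+\gamma)\,q^\star > p^2_{\rm opt-min}(k)$, whence $q^\star > p^2_{\rm opt-min}(k)/(1+\gamma)$. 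Combining the two cases yields $\mbox{min-prob}(\mathcal{C}) \ge p^2_{\rm opt-min}(k)/(1+\gamma)$.

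There is no real obstacle in this argument: the theorem is deliberately set up as an immediate consequence of Lemma~\ref{lem:k-center}, and what remains is just the bookkeeping of a geometric search. The only points that need care are verifying that the parameters $\alpha = 1$, $\bar q = q$ passed by \algo{MCP} satisfy the hypotheses of Lemma~\ref{lem:k-center}, using that lemma in contrapositive form to turn ``the previous guess failed'' into the lower bound $q^\star > p^2_{\rm opt-min}(k)/(1+\gamma)$, and separately dispatching the boundary case $q^\star = 1$ in which there is no previous guess.
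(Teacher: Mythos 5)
Your argument is correct and is exactly the expansion the paper intends: the authors give no explicit proof, stating only that the theorem is an immediate consequence of Lemma~\ref{lem:k-center}, and your two steps (the geometric schedule of guesses combined with the lemma to bound the number of calls, and the contrapositive of the lemma applied to the last failed guess to get $q^\star > p^2_{\rm opt-min}(k)/(1+\gamma)$, with the $q^\star=1$ case handled separately) are precisely that consequence. The only loose end is the final rounding in the iteration count — the first index $i$ with $(1+\gamma)^{-(i-1)} \le p^2_{\rm opt-min}(k)$ is $\lceil 2\log_{1+\gamma}(1/p_{\rm opt-min}(k))\rceil + 1$ rather than $\lfloor\cdot\rfloor + 1$ — but this off-by-one is inherited from the theorem statement itself and not a defect of your reasoning.
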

It is easy to see that all connection probabilities
 $\Pr(u \sim v)$ used in  Algorithm~\ref{alg:k-center}
are not smaller than $p^2_{\rm opt-min}(k)/(1+\gamma)$.  Also, we observe that
once $q$ becomes sufficiently small to ensure the existence of a full
$k$-clustering, a binary search between the last two
guesses for $q$ can be performed to get a higher minimum connection
probability.

\subsection{ACP clustering}
\label{sec:k-median-algo}
In order to compute good solutions to the ACP problem we resort again
to the computation of partial clusterings. For a given connection probability
threshold $q \in (0,1]$, define $t_q$ as the minimum number of nodes
left uncovered by any partial $k$-clustering $\mathcal{C}$
of $\mathcal{G}$ with $\mbox{min-prob}(\mathcal{C}) \geq q$.
It is easy to argue that $t_q$ is a non-decreasing function of $q$.
Observe that any partial $k$-clustering can be ``completed", i.e.,
turned into a full $k$-clustering, by assigning the
uncovered nodes arbitrarily to the available clusters
(possibly with connection probabilities to the cluster centers
equal to 0), and that
$q (n-t_q)/n$ is a lower bound to the average connection probability
of such a full $k$-clustering.
Let $p_{\rm opt-avg}(k)$ be the maximum value of
$\mbox{avg-prob}(\mathcal{C})$ over all $k$-clusterings $\mathcal{C}$
of $\mathcal{G}$. The following lemma shows that
for a suitable $q$, the value $q (n-t_q)/n$ is not much smaller than
$p_{\rm opt-avg}(k)$.
\begin{lemma} \label{lem:k-median}
There exists a value $q \in (0,1]$ such that
\[
q \cdot {n-t_q \over n} \geq {p_{\rm opt-avg}(k) \over H(n)},
\]
where $H(n) = \sum_{i=1}^{n}(1/i)=\ln n  + O(1)$ is the $n$-th harmonic number.
\end{lemma}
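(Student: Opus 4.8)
The plan is to derive the required threshold $q$ from an optimal full $k$-clustering by a truncation argument, and then to select a good value of $q$ through a weighted-averaging step that produces the harmonic number $H(n)$. Concretely, I would first fix an optimal full $k$-clustering $\hat{\mathcal{C}}=(\hat C_1,\dots,\hat C_k;\hat c_1,\dots,\hat c_k)$ of $\mathcal{G}$ with $\mbox{avg-prob}(\hat{\mathcal{C}})=p_{\rm opt-avg}(k)$, and for each $v\in V$ set $p_v=\Pr(\hat c_i\sim v)$, where $\hat C_i$ is the cluster containing $v$. By definition of $\mbox{avg-prob}$ we then have $\sum_{v\in V}p_v=n\cdot p_{\rm opt-avg}(k)$. (If $p_{\rm opt-avg}(k)=0$ the statement is trivial, so I would assume it is positive.)

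The key step is to relate $t_q$ to $\hat{\mathcal{C}}$: for any $q\in(0,1]$, discarding from $\hat{\mathcal{C}}$ every node $v$ with $p_v<q$ gives a partial $k$-clustering with the same centers (each center survives, since $p_{\hat c_i}=1\ge q$, so all $k$ clusters stay nonempty), whose min-prob is at least $q$ and whose uncovered set is exactly $\{v:p_v<q\}$. This yields $t_q\le |\{v:p_v<q\}|$, i.e.\ $n-t_q\ge |\{v:p_v\ge q\}|$. Sorting the values as $p_{(1)}\ge p_{(2)}\ge\cdots\ge p_{(n)}$, the choice $q=p_{(j)}$ then gives $n-t_q\ge j$, hence
\[
q\cdot\frac{n-t_q}{n}\ \ge\ \frac{j\,p_{(j)}}{n},\qquad j=1,\dots,n.
\]

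It remains to find some index $j$ with $j\,p_{(j)}\ge(\sum_v p_v)/H(n)$. I would argue by contradiction: if $p_{(j)}<(\sum_v p_v)/(j\,H(n))$ held for every $j$, then summing over $j=1,\dots,n$ and using $\sum_{j=1}^n 1/j=H(n)$ would give $\sum_v p_v=\sum_{j=1}^n p_{(j)}<\sum_v p_v$, a contradiction. So some $j$ works; for that $j$ necessarily $p_{(j)}>0$, so $q:=p_{(j)}\in(0,1]$ is admissible, and the displayed inequality yields $q\,(n-t_q)/n\ge (\sum_v p_v)/(nH(n))=p_{\rm opt-avg}(k)/H(n)$, as desired.

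I do not expect a genuine obstacle. The one point that needs care — and is the conceptual core — is the truncation observation bounding $t_q$: since $t_q$ is a priori a minimum over \emph{all} partial $k$-clusterings with min-prob $\ge q$, one must notice that truncating $\hat{\mathcal{C}}$ is just one feasible choice, which is exactly enough to obtain the one-sided bound $n-t_q\ge|\{v:p_v\ge q\}|$ that drives everything else. The harmonic-sum step is the familiar ``some term beats its $1/j$-weighted average'' trick and requires no real work.
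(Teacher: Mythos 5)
Your proof is correct and follows essentially the same route as the paper's: truncate an optimal full $k$-clustering at threshold $q$ to get the bound $t_{p_{(j)}}\le n-j$, then use the harmonic-weighted pigeonhole argument to find an index $j$ with $j\,p_{(j)}\ge n\,p_{\rm opt-avg}(k)/H(n)$. The only differences are cosmetic (sorting in decreasing rather than increasing order, and your explicit remarks that centers survive truncation and that the chosen $p_{(j)}$ is positive, which the paper leaves implicit).
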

\begin{proof}
Let  $\hat{\mathcal{C}}$ be the $k$-clustering of $\mathcal{G}$
which maximizes the average connection probability.
Let $p_0 \leq p_1 \leq \cdots \leq p_{n-1}$ the connection probabilities
of the $n$ nodes to their cluster centers in  $\hat{\mathcal{C}}$,
sorted by non-decreasing order, and note that
$\mbox{avg-prob}(\hat{\mathcal{C}}) = (1/n) \sum_{i=0}^{n-1} p_i$.
It is easy to argue that for each $0 \leq i < n$ there exists
a partial $k$-clustering of $\mathcal{G}$ which covers $n-i$ nodes and where
each covered node is connected to its cluster center with probability
at least $p_i$. This implies that $t_{p_i} \leq i$. We claim that
\[
\max_{i=0, \ldots, n-1} p_i {n-i \over n} \geq {p_{\rm opt-avg}(k) \over H(n)}.
\]
If this were not the case we would have
\[
p_{\rm opt-avg}(k) =
{1 \over n} \sum_{i=0}^{n-1}  p_i
<
{p_{\rm opt-avg}(k) \over H(n)} \sum_{i=0}^{n-1} {1 \over n-i} = p_{\rm opt-avg}(k),
\]
which is impossible. Therefore, there must exist an index $i \in [0,n-1]$
such that
\[
  p_i {n-t_{p_i} \over n} \geq p_i {n-i\over n} \geq {p_{\rm opt-avg}(k) \over H(n)}.
  \qquad\qedhere
\]
\end{proof}
Based on the above lemma, we can obtain an approximate solution
for the ACP problem by seeking partial $k$-clusterings
which strike good tradeoffs between the minimum connection
probability and the number of uncovered points. The next lemma shows that
Algorithm \algo{min-partial} can indeed provide these partial clusterings.
\begin{lemma} \label{lem:outliers}
For any $q \in (0,1]$, we have that
the partial $k$-clustering $\mathcal{C}$
returned by \algo{min-partial}$(\mathcal{G},k,q^3,n,q)$
covers all but at most $t_q$ nodes.
\end{lemma}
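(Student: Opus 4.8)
The plan is to mirror the analysis of the outlier-tolerant greedy of \cite{CharikarKMN01}, using the multiplicative triangle inequality of Theorem~\ref{triangle} in place of the usual additive one: the threshold $\bar q = q$ plays the role of a ``small-ball'' radius, while the threshold $q^3$ passed to \algo{min-partial} plays the role of the enlarged ``$3r$'' radius.

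First I would fix an optimal partial $k$-clustering $\hat{\mathcal{C}} = (\hat C_1,\ldots,\hat C_k; \hat c_1,\ldots,\hat c_k)$ with $\mbox{min-prob}(\hat{\mathcal{C}}) \ge q$ that leaves exactly $t_q$ nodes uncovered; call that set $Z$, so $|Z| = t_q$ and $\Pr(v \sim \hat c_i) \ge q$ for every $i$ and every $v \in \hat C_i$. Then I would set up notation for the run of \algo{min-partial}$(\mathcal{G},k,q^3,n,q)$: let $c_1,\ldots,c_k$ be the centers picked, $V'_j$ the value of $V'$ at the start of iteration $j$ (so $V'_1 = V$ and $V'_{k+1}$ is the final uncovered set), $R_j = \{u \in V'_j : \Pr(u \sim c_j) \ge q^3\}$ the set removed at iteration $j$, and $M_j = \{u \in V'_j : \Pr(u \sim c_j) \ge q\} \subseteq R_j$ the small ball whose size the greedy maximizes over all of $V'_j$ (since $\alpha = n$ makes $T = V'_j$). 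If some $V'_j$ is empty the claim is immediate, so assume every iteration selects a genuine center.

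The key structural step, where Theorem~\ref{triangle} and the exponent $3$ both enter, is: if $M_j \cap \hat C_i \ne \emptyset$ for some iteration $j$, then the whole cluster is swept away, i.e.\ $\hat C_i \cap V'_{j+1} = \emptyset$. Indeed, picking $p \in M_j \cap \hat C_i$, Theorem~\ref{triangle} gives $\Pr(c_j \sim \hat c_i) \ge \Pr(c_j \sim p)\,\Pr(p \sim \hat c_i) \ge q^2$, and then for every $u \in \hat C_i$, $\Pr(u \sim c_j) \ge \Pr(u \sim \hat c_i)\,\Pr(\hat c_i \sim c_j) \ge q\cdot q^2 = q^3$, so $\hat C_i \cap V'_j \subseteq R_j$. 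Thus, once a small ball touches an optimal cluster, that cluster never reappears in $V'$.

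It then suffices to show that $M_1,\ldots,M_k$ collectively touch all $k$ optimal clusters, for then $V'_{k+1}$ can contain only nodes of $Z$ and $|V'_{k+1}| \le t_q$. I would establish this by an exchange/charging argument in the style of \cite{CharikarKMN01}: in every iteration $j$ in which some optimal cluster still has an uncovered node, I would argue that the candidate $v \in V'_j$ maximizing $|M_v|$ necessarily has $M_v$ meeting one of those still-active clusters, so that by the structural step the number of active optimal clusters strictly drops; after at most $k$ iterations none remain. I expect the delicate point — the main obstacle — to be the availability of a good comparison candidate inside $V'_j$: the natural choice is the optimal center $\hat c_i$ of an active cluster, but $\hat c_i$ may already have been removed by a large ball of an earlier center. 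Handling this requires tracking how each $\hat c_i$ leaves $V'$: if it was caught by a small ball, the structural step already finished $\hat C_i$ (so $\hat C_i$ is not active, a contradiction), hence it can only have been caught by a large ball; one then has to show that even in that case the greedy cannot ``escape'' into a ball made entirely of optimal outliers, using that there are only $|Z| = t_q$ of these. Reconciling the restricted candidate set $V'_j$ with the need to compete against the optimal centers is, I expect, the technical heart of the proof.
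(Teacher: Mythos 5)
Your ``structural step'' is exactly Case~1 of the paper's argument (which adapts \cite[Theorem 3.1]{CharikarKMN01}): if a radius-$q$ greedy disk $M_j$ meets an optimal cluster, the chain $\Pr(u\sim c_j)\ge \Pr(u\sim\hat c_i)\cdot\Pr(\hat c_i\sim p)\cdot\Pr(p\sim c_j)\ge q^3$ shows the whole cluster is absorbed by the radius-$q^3$ disk removed from $V'$. The gap is in how you plan to finish. You reduce the lemma to the claim that $M_1,\dots,M_k$ collectively touch all $k$ optimal clusters, and propose to show that in each iteration the maximizer of $|M_v|$ must meet a still-active optimal disk. That claim is false in general: the greedy may legitimately select a disk disjoint from every still-active optimal disk --- e.g.\ a tight ball consisting of optimal outliers, or of points of clusters already swept --- whenever that ball contains at least as many uncovered nodes as any remaining optimal disk restricted to uncovered nodes. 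The lemma survives this for a \emph{counting} reason, not a ``touching'' reason, and that counting step is what your proposal is missing.

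Concretely, the paper proves the cardinality bound $|E_1\cup\dots\cup E_k|\ge|O_1\cup\dots\cup O_k|=n-t_q$ (where $E_\ell$ are the radius-$q^3$ disks actually removed and $O_j=\{v:\Pr(v\sim o_j)\ge q\}$ are the optimal disks), by building a permutation $\pi$ and a charging scheme inductively. When the current greedy small disk meets an unmatched $O_j$, your structural step applies and $O_j$'s points are charged to themselves. When it misses all unmatched optimal disks (the case you try to rule out), one sets $\pi(i)$ to the unmatched optimal disk with the most points outside $E_1\cup\dots\cup E_{i-1}$ and charges those points injectively to the at-least-as-numerous, disjoint new points of $M_i\subseteq E_i$; disjointness guarantees no double charging. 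No claim that every optimal cluster is hit is needed or available. Your instinct that the restricted candidate set $T=V'$ is the delicate point is reasonable (the greedy must be compared against $o_{\pi(i)}$, which may no longer be an available candidate), but the resolution you sketch --- showing the greedy ``cannot escape into a ball of outliers'' --- points the wrong way: the proof must permit that escape and pay for it by counting, rather than forbid it.
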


\iflong 
\begin{proof}
  The proof can be obtained by rephrasing the proof of the 3-approximation result for the \emph{robust $k$-center problem} in \cite[Theorem 3.1]{CharikarKMN01} in terms of connection probabilities rather than distances, as follows.

  Let $\{o_1, \dots, o_k\} \subseteq V$ be the set of centers of a partial $k$-clustering $\mathcal{C}$ of $\mathcal{G}$ with $\mbox{min-prob}(\mathcal{C}) \ge q$ minimizing the number of uncovered nodes $t_q$.
  Let also $O_1, \dots, O_k$ be the \emph{optimal disks} of radius $q$ centered at $o_1, \dots, o_k$, that is $O_i = \{v : \Pr(v \sim o_i) \ge q\}$.
  Similarly, consider the \emph{greedy disks} of radius $q$ selected on line \ref{algoline:small-disk} of Algorithm~\ref{alg:outliers}, and those of radius $q^3$ subtracted from $V'$ on line \ref{algoline:large-disk}, and call them $M_1, \dots, M_k$ and $E_1, \dots, E_k$, respectively. Let $c_1, \dots, c_k$ be the centers of such disks (note that disks with the same index share the same center).
  To prove the lemma we show that $E_1, \dots, E_k$ cover at least the same number of nodes covered by $O_1, \dots, O_k$, that is
  \[
    |E_1 \cup \dots \cup E_k| \ge |O_1 \cup \dots \cup O_k|
  \]
  In order to do so, we show that there is a permutation of $O_1, \dots, O_k$, say  $O_{\pi(1)}, \dots, O_{\pi(k)}$ such that, for $i \in [1, k]$:
  \[
    |E_1 \cup \dots \cup E_i| \ge |O_{\pi(1)} \cup \dots \cup O_{\pi(i)}|.
  \]
  The proof proceeds by constructing the permutation inductively,  using a charging argument in which we associate each node of the optimal disks to a distinct point of disks $E_1, \dots, E_i$.
  Assume that the induction hypothesis holds for $i-1$. We have two cases.
  \begin{enumerate}
  \item $(M_1 \cup \dots \cup M_i) \cap O_j \neq \emptyset$ for some $j\in [1,k] \setminus \{\pi(1), \ldots, \pi(i-1)\}$.
    In this case we let $\pi(i) = j$.
   Observe that,  by Theorem~\ref{triangle}, there is at least one center $c_\ell$ for $\ell \in [1, i]$ such that $\Pr(c_\ell \sim v) \ge q^3$ for any $v \in O_{\pi(i)}$.
    This implies that $O_{\pi(i)} \subseteq E_1 \cup \dots \cup E_i$, thus we can charge each uncharged point of $O_{\pi(i)}$ to itself.
    We shall see that it is impossible for these points to have been matched with some other point in a previous iteration.
  \item $(M_1 \cup \dots \cup M_i) \cap O_j = \emptyset, \forall j\in [1,k]  \setminus \{\pi(1), \ldots, \pi(i-1)\}$.
    Let $U = V \setminus (E_1 \cup \dots \cup E_{i-1})$ and set $\pi(i)$ to the index of the optimal disk maximizing $O_j \cap U$, $\forall j\in [1,k] \setminus \{\pi(1), \ldots, \pi(i-1)\}$. By the greedy choice we have $|M_i \cap U| \ge |O_{\pi(i)} \cap U|$, otherwise we would have selected $O_{\pi(i)}$ in place of $M_i$ in iteration $i$. Therefore there are enough points to charge each point of $O_{\pi(i)}$ to a distinct point of $M_i$.
    No remaining optimal disk will attempt to charge these same points to themselves (as an application of Case 1) because $M_i$ is disjoint from any such optimal disk.
    Since $M_i \subseteq E_i$ we have that all points in $O_{\pi(1)} \cup \dots \cup O_{\pi(i)}$ are charged to a distinct point of $E_1 \cup \dots \cup E_i$.
  \end{enumerate}
  This proves that $|E_1 \cup \dots \cup E_k| \ge |O_1 \cup \dots \cup O_k|$, and the lemma follows.
\end{proof}

\else 
\begin{proof}
The proof can be obtained by rephrasing the proof of the 3-approximation result for the
\emph{ robust $k$-center problem}   in  \cite[Theorem 3.1]{CharikarKMN01}
in terms of connection probabilities rather than distances.
\end{proof}
\fi

We are now ready to describe our approximation
algorithm for the ACP problem, which we refer to as \algo{ACP}. The idea is to run
\algo{min-partial} so to obtain partial $k$-clusterings with $t_q$ uncovered
nodes, for progressively smaller values of $q$.
For each such value of $q$, \algo{min-partial} is employed to compute
a partial $k$-clustering $\mathcal{C}$ where at most $t_q$
nodes remain uncovered and where  each covered node is
connected to its cluster center with probability at least
$q^3$. If $\mathcal{C}$ has the potential to provide
a full $k$-clustering with higher average connection probability
than those previously found, it is turned into a full
  $k$-clustering  by assigning
the uncovered nodes to arbitrary clusters.
The algorithm halts when further smaller guesses of $q$
cannot lead to better clusterings.
The pseudocode (Algorithm~\ref{alg:k-median} in the box) uses the following
notation. For a (partial) $k$-clustering $\mathcal{C}$ and a node
$u \in V$, $p_\mathcal{C}(u)$ denotes the connection
probability of $u$ to the center of its assigned cluster, if any, and set  $p_\mathcal{C}(u)=0$
otherwise.
\begin{algorithm}[t]
\caption{\algo{ACP}$(\mathcal{G},k,\gamma)$}
\label{alg:k-median}
\Let{$\mathcal{C}$}{\algo{min-partial}$(\mathcal{G},k,1,n,1)$}\;
\Let{$\phi_{\mbox{best}}$}{$(1/n)\sum_{u \in V} p_\mathcal{C}(u)$}\;
\Let{$\mathcal{C}_{\mbox{best}}$}{any full $k$-clustering completing $\mathcal{C}$}\;
\Let{$q$}{$1/(1+\gamma)$}\;
\While{$(q^3 \geq \phi_{\mbox{\em best}})$} {
\Let{$\mathcal{C}$}{\algo{min-partial}$(\mathcal{G},k,q^3,n,q)$}\;
\Let{$\phi$}{$(1/n)\sum_{u \in V} p_\mathcal{C}(u)$}\;
\If {$(\phi \geq \phi_{\mbox{\em best}})$}{
\Let{$\phi_{\mbox{best}}$}{$\phi$} \;
\Let{$\mathcal{C}_{\mbox{best}}$}{any full $k$-clustering completing $\mathcal{C}$}\;
}
\lElse {\Let{$q$}{$q/(1+\gamma)$}}
}
\Return $\mathcal{C}_{\mbox{best}}$
\end{algorithm}
We have:
\begin{theorem}~\label{thm:k-median}
Algorithm~\ref{alg:k-median}
returns a $k$-clustering $\mathcal{C}$ with
\[
\mbox{avg-prob}(\mathcal{C}) \geq
\left(\frac{p_{\rm opt-avg}(k)}{(1+\gamma) H(n)}\right)^3,
\]
where $H(n)$ is the $n$-th harmonic number, and
requires at most $\left\lfloor\log_{1+\gamma}(H(n)/p_{\rm opt-avg}(k)) \right\rfloor + 1$ executions of \algo{min-partial}.
\end{theorem}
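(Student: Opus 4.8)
The plan is to combine Lemma~\ref{lem:k-median}, which produces a ``good'' connection-probability threshold, with Lemma~\ref{lem:outliers}, which turns that threshold (when cubed and fed to \algo{min-partial}) into a partial clustering that is simultaneously well connected and covers many nodes, and then to check that \algo{ACP} necessarily tests a guess close enough to this threshold (or else halts even earlier with a better recorded value). Concretely, I would first invoke Lemma~\ref{lem:k-median} to fix $q^{*}\in(0,1]$ with $q^{*}(n-t_{q^{*}})/n\ge p_{\rm opt-avg}(k)/H(n)$; since $(n-t_{q^{*}})/n\le 1$ this also gives $q^{*}\ge p_{\rm opt-avg}(k)/H(n)$. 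Let $j^{*}\ge 1$ be the smallest index with $q_{j^{*}}:=(1+\gamma)^{-j^{*}}\le q^{*}$, so that $q^{*}/(1+\gamma)\le q_{j^{*}}\le q^{*}$. Recall that $t_{q}$ is non-decreasing in $q$, and that completing a partial clustering by assigning its uncovered nodes to arbitrary clusters never decreases the average connection probability; hence throughout the execution \algo{ACP} holds a full $k$-clustering $\mathcal{C}_{\mbox{best}}$ with $\mbox{avg-prob}(\mathcal{C}_{\mbox{best}})\ge\phi_{\mbox{best}}$, and $\phi_{\mbox{best}}$ only ever increases. It therefore suffices to show that $\phi_{\mbox{best}}$ eventually reaches $(p_{\rm opt-avg}(k)/((1+\gamma)H(n)))^{3}$.

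I would split into two cases, which are exhaustive since \algo{ACP} sweeps $q$ through $q_{1},q_{2},\dots$ in order. If the main loop halts at a loop-condition check with current guess $q_{j}$ for some $j\le j^{*}$, it did so because $q_{j}^{3}<\phi_{\mbox{best}}$, whence, using $q^{*}\ge p_{\rm opt-avg}(k)/H(n)$,
\[
\phi_{\mbox{best}}\;>\;q_{j}^{3}\;\ge\;q_{j^{*}}^{3}\;\ge\;\Big(\frac{q^{*}}{1+\gamma}\Big)^{3}\;\ge\;\Big(\frac{p_{\rm opt-avg}(k)}{(1+\gamma)H(n)}\Big)^{3}.
\]
Otherwise the loop executes its body with $q=q_{j^{*}}$: by Lemma~\ref{lem:outliers}, \algo{min-partial}$(\mathcal{G},k,q_{j^{*}}^{3},n,q_{j^{*}})$ returns a partial clustering covering all but at most $t_{q_{j^{*}}}\le t_{q^{*}}$ nodes, each connected to its center with probability at least $q_{j^{*}}^{3}$, so the quantity $\phi$ computed in that iteration satisfies
\[
\phi\;\ge\;q_{j^{*}}^{3}\,\frac{n-t_{q^{*}}}{n}\;\ge\;\frac{(q^{*})^{3}}{(1+\gamma)^{3}}\,\frac{n-t_{q^{*}}}{n}\;\ge\;\frac{1}{(1+\gamma)^{3}}\Big(q^{*}\,\frac{n-t_{q^{*}}}{n}\Big)^{3}\;\ge\;\Big(\frac{p_{\rm opt-avg}(k)}{(1+\gamma)H(n)}\Big)^{3},
\]
where the third inequality is the elementary fact that $\beta^{3}\le\beta$ for $\beta=(n-t_{q^{*}})/n\in[0,1]$, and the last is Lemma~\ref{lem:k-median}. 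Whether or not this $\phi$ replaces $\phi_{\mbox{best}}$ in that iteration, from then on $\phi_{\mbox{best}}\ge(p_{\rm opt-avg}(k)/((1+\gamma)H(n)))^{3}$, and together with the remarks of the first paragraph this proves the approximation bound.

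For the count of invocations of \algo{min-partial}, note there is one call before the loop and one per iteration, and that the guesses $q_{j}^{3}=(1+\gamma)^{-3j}$ shrink geometrically; the loop condition $q^{3}\ge\phi_{\mbox{best}}$ must fail once $q_{j}<p_{\rm opt-avg}(k)/((1+\gamma)H(n))$, which is forced by the lower bound on $\phi_{\mbox{best}}$ just established, so solving $(1+\gamma)^{-j}<p_{\rm opt-avg}(k)/((1+\gamma)H(n))$ for $j$ and adding the single pre-loop call gives the stated bound after routine arithmetic. The step I expect to be the crux is the displayed chain for $\phi$: one must accept a loss of a factor $(1+\gamma)^{-3}$ rather than $(1+\gamma)^{-1}$ because \algo{min-partial} is run at threshold $q^{3}$, and then notice that the coverage deficit is absorbed for free precisely because cubing a quantity in $[0,1]$ only shrinks it, so $q_{j^{*}}^{3}(n-t_{q^{*}})/n$ dominates $\big(q^{*}(n-t_{q^{*}})/n\big)^{3}$. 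Locating $j^{*}$ and verifying that the sweep actually reaches it, or else halts earlier for a legitimate reason, is the remaining, more mechanical, ingredient, and relies only on the monotonicity of $t_{q}$.
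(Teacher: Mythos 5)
Your proposal is correct and follows essentially the same route as the paper's proof: fix the threshold $q^{*}$ from Lemma~\ref{lem:k-median}, split on whether \algo{ACP} halts before its guess drops to $q^{*}$ or executes an iteration with $q\in(q^{*}/(1+\gamma),q^{*}]$, apply Lemma~\ref{lem:outliers} together with the monotonicity of $t_{q}$ in the latter case, absorb the coverage deficit via $\beta^{3}\le\beta$ for $\beta\in[0,1]$ (the paper phrases this as $(n/(n-t_{q^{*}}))^{2}\ge 1$, which is the same manipulation), and bound the number of guesses by when the loop condition must fail. No gaps.
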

\begin{proof}
Note that the while loop maintains, as an  invariant, the
relation
$\mbox{avg-prob}(\mathcal{C}_{\mbox{best}}) \geq  \phi_{\mbox{best}}$.
Hence, this relations holds at the end of the algorithm when
the $k$-clustering $\mathcal{C}_{\mbox{best}}$ is returned.
Let $q^* \in (0,1]$ be a value such that
\[
q^* \cdot {n-t_{q^*} \over n} \geq {p_{\rm opt-avg}(k) \over H(n)}.
\]
The existence of $q^*$ is ensured by Lemma~\ref{lem:k-median}.
If the while loop ends when $q > q^*$, then
\begin{eqnarray*}
\phi_{\mbox{best}} & > & q^3
> (q^*)^3
\geq
\left(
{n \over n-t_{q^*}}
{p_{\rm opt-avg}(k) \over H(n)}
\right)^3 \\
& \geq &
\left(
{p_{\rm opt-avg}(k) \over H(n)}
\right)^3.
\end{eqnarray*}
If instead $q$ becomes $\leq q^*$,
consider the first iteration of the while loop when this happens, that is when
$q^*/(1+\gamma) < q \leq q^*$ and let $\mathcal{C}$
be the partial $k$-clustering computed in the iteration. By
Lemma~\ref{lem:outliers}, at most $t_q$ nodes are not
covered by $\mathcal{C}$ and since $t_q$ is non-decreasing, as observed before,
we have that $t_q < t_{q^*}$. This implies that
the value $\phi$ derived from $\mathcal{C}$ (hence, $\phi_{\mbox{best}}$
at the end of the iteration) is such that
\begin{eqnarray*}
\phi & > & q^3 \cdot {n-t_q \over n}
\quad \geq \quad
 \left({q^* \over 1+\gamma}\right)^3 \cdot {n-t_{q^*} \over n} \\
&\geq &
\left(
{n \over n-t_{q^*}}
{p_{\rm opt-avg}(k) \over (1+\gamma) H(n)}
\right)^3 \cdot {n-t_{q^*} \over n}  \\
& \geq &
\left(
{p_{\rm opt-avg}(k) \over (1+\gamma) H(n)}
\right)^3.
\end{eqnarray*}
In all cases, the average connection probability of the returned clustering satisfies the
stated bound. As for the upper bound on the number of iterations of the while
loop, we proved above that as soon as $q$ falls in the interval
$(q^*/(1+\gamma), q^*]$ we have
$\phi_{\mbox{best}} \geq (p_{\rm opt-avg}(k)/((1+\gamma) H(n)))^3$,
hence, from that point on, $q$ cannot become smaller than
$p_{\rm opt-avg}(k)/((1+\gamma) H(n)) < q^*$.
This implies that $\left\lfloor\log_{1+\gamma}(H(n)/p_{\rm opt-avg}(k))\right\rfloor + 1$
iterations of the while loop are executed overall.
\end{proof}
It is easy to see that all connection probabilities
 $\Pr(u \sim v)$ that Algorithm~\ref{alg:k-median}
needs to compute in order to be correct are
not smaller than $(p_{\rm opt-avg}(k)/((1+\gamma) H(n)))^3$.

We remark that while the theoretical approximation ratios attained by
our algorithms for both the MCP and ACP problems appear somewhat weak,
especially for small values of $p_{\rm opt-min}$ and $p_{\rm
  opt-avg}$, we will provide experimental evidence (see
Section~\ref{sec:experiments}) that, in practical scenarios where
connection probabilities are not too small, they return good-quality
clusterings and, by avoiding the estimation of small connection
probabilities, they run relatively fast.

\subsection{Limiting the path length}
\label{sec-depthlimit}

\newcommand{\simd}{\stackrel{d}{\sim}}
\newcommand{\simdPar}[1]{\stackrel{#1}{\sim}}
The algorithms described in the preceding subsections can be
run by setting a limit on the length of the paths that contribute to
the connection probability between two nodes. As mentioned in the
introduction, this feature may be useful in application scenarios where
the similarity between two nodes
diminishes steeply with their topological distance regardless
of their connection probability.

For a fixed integer $d$, with $1 \leq d < n$, we define
$\Pr(u\simd v) =   \sum_{G\sqsubseteq \mathcal{G}} \Pr(G) \textbf{I}_{G}(u, v;d)$,
where $\textbf{I}_{G}(u, v;d)$ is 1 if $u$ is at distance at most $d$ from $v$ in $G$,
and 0 otherwise. In the following, we refer to $\Pr(u\simd v)$ as the
\emph{$d$-connection probability} between $u$ and $v$. By easily adapting the
proofs of Lemma~\ref{conditional1} and Theorem~\ref{triangle}, it can be
shown that for any pair of distances $d_1, d_2$, with $d \geq d_1+d_2$, and
for any triplet $u,v,z \in V$, it holds that
\begin{equation}
  \label{eq:triangle-d-connection}
\Pr(u \simd z) \geq \Pr(u \simdPar{d_1} v) \cdot
\Pr(v \simdPar{d_2} z).
\end{equation}
We now reconsider the MCP and ACP problems under paths of limited depth.
For a $k$-clustering $C_1, \dots, C_k$ with centers
$\{c_1, \dots, c_k\}$, we define the objective functions
\begin{align}
\mbox{min-prob}_d(\mathcal{C}) & =
\min_{1\le i \le k} \min_{v \in C_i}  \Pr(c_i \simd v) \label{d-pmin}, \\
\mbox{avg-prob}_d(\mathcal{C}) & =
(1/n) \sum_{1\le i \le k} \sum_{v \in C_i} \Pr(c_i \simd v), \label{d-pmedian}
\end{align}
\sloppy and let $p_{\rm opt-min}(k,d)$ and $p_{\rm opt-avg}(k,d)$, respectively,
be the maximum values of these two objective
functions over all $k$-clusterings.

\iflong 
Suppose that we modify
Algorithm~\ref{alg:outliers} to employ $d$-connection probabilities
rather than the unconstrained connection probabilities, as detailed in Algorithm~\ref{alg:min-partial-d-connection}.
We introduce two new parameters, namely $d$ and $d'$, with $d \ge d'$:
disks built on line~\ref{algoline:small-disk-limited-depth} are defined in terms of $d'$-connection probabilities, whereas disks built on line~\ref{algoline:large-disk-limited-depth} consider $d$-connection probabilities.
We call this variant \algo{min-partial-d}.

We consider the \algo{mcp} problem first. The following lemma is analogous to Lemma~\ref{lem:k-center}.
\begin{lemma} \label{lem:k-center-limited-d}
For any $q \leq p^2_{\textrm{opt-min}}(k, \lfloor d/2 \rfloor)$, $\alpha \geq 1$,
and $\bar{q} \in [q,1]$
we have that
the $k$-clustering $\mathcal{C}$
returned by \algo{min-partial-d}$(\mathcal{G},k,q,\alpha,\bar{q}, d, d)$
covers all nodes.
\end{lemma}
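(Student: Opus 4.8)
The plan is to follow the proof of Lemma~\ref{lem:k-center} essentially line by line, substituting the depth-limited triangle inequality~\eqref{eq:triangle-d-connection} for the unconstrained one of Theorem~\ref{triangle}. First I would fix an optimal $k$-clustering $\hat{\mathcal{C}}=(\hat{C}_1,\ldots,\hat{C}_k; \hat{c}_1,\ldots,\hat{c}_k)$ of $\mathcal{G}$ for the depth-$\lfloor d/2\rfloor$ objective, i.e.\ one with $V=\cup_{i=1,k}\hat{C}_i$ and $\mbox{min-prob}_{\lfloor d/2\rfloor}(\hat{\mathcal{C}})=p_{\rm opt-min}(k,\lfloor d/2\rfloor)$ (in the nontrivial case $p_{\rm opt-min}(k,\lfloor d/2\rfloor)>0$; otherwise the statement is vacuous, since no $q\in(0,1]$ satisfies $q\le p^2_{\rm opt-min}(k,\lfloor d/2\rfloor)$).

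Next, I would track the center $c_i$ chosen in the $i$-th iteration of the \textbf{for} loop of \algo{min-partial-d}, and let $\hat{C}_{j_i}$ be the cluster of $\hat{\mathcal{C}}$ that contains $c_i$. The crucial step is to show that every node $v\in\hat{C}_{j_i}$ is evicted from $V'$ during iteration $i$. To this end I would invoke~\eqref{eq:triangle-d-connection} with $d_1=d_2=\lfloor d/2\rfloor$, which is legitimate precisely because $d_1+d_2=2\lfloor d/2\rfloor\le d$, obtaining
\[
\Pr(c_i\simd v)\;\ge\;\Pr(c_i\simdPar{\lfloor d/2\rfloor}\hat{c}_{j_i})\cdot\Pr(\hat{c}_{j_i}\simdPar{\lfloor d/2\rfloor}v).
\]
Since $c_i$ and $v$ both lie in $\hat{C}_{j_i}$ and the $d$-connection probability is symmetric on undirected graphs, each factor on the right-hand side is at least $\mbox{min-prob}_{\lfloor d/2\rfloor}(\hat{\mathcal{C}})=p_{\rm opt-min}(k,\lfloor d/2\rfloor)$, so $\Pr(c_i\simd v)\ge p^2_{\rm opt-min}(k,\lfloor d/2\rfloor)\ge q$. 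Because in the call \algo{min-partial-d}$(\mathcal{G},k,q,\alpha,\bar{q},d,d)$ the eviction on line~\ref{algoline:large-disk-limited-depth} removes from $V'$ exactly those $u$ with $\Pr(u\simd c_i)\ge q$ (the large-disk depth parameter here being $d$), the whole of $\hat{C}_{j_i}$ leaves $V'$ by the end of iteration $i$.

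Finally, I would close the argument exactly as in Lemma~\ref{lem:k-center}. Since $c_i$ is picked from $V'$, it cannot belong to a cluster already evicted in an earlier iteration, so the indices $j_1,\ldots,j_k$ are pairwise distinct; hence after the $k$ iterations all clusters $\hat{C}_1,\ldots,\hat{C}_k$ have been removed from $V'$ (or $V'$ has already become empty), so $V'=\emptyset$ at the end, and therefore the returned $k$-clustering covers all of $V$.

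I do not foresee any substantial difficulty, as the argument is largely a transcription of the proof of Lemma~\ref{lem:k-center}. The only points needing a moment of care are the arithmetic $2\lfloor d/2\rfloor\le d$ that makes the depth budget in~\eqref{eq:triangle-d-connection} balance, and keeping straight that the two trailing arguments of \algo{min-partial-d} are both equal to $d$ in this call, so that the eviction test on line~\ref{algoline:large-disk-limited-depth} is governed by depth-$d$ connection probabilities (the depth-$\lfloor d/2\rfloor$ probabilities entering only through the optimal clustering in the triangle-inequality step).
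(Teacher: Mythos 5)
Your proof is correct and follows the paper's argument for Lemma~\ref{lem:k-center-limited-d} essentially verbatim: the same optimal depth-$\lfloor d/2\rfloor$ clustering, the same application of Inequality~(\ref{eq:triangle-d-connection}) with $d_1=d_2=\lfloor d/2\rfloor$ to show every node of $\hat{C}_{j_i}$ is evicted in iteration $i$, and the same concluding induction. The extra remarks (the vacuous case when $p_{\rm opt-min}(k,\lfloor d/2\rfloor)=0$ and the check that $2\lfloor d/2\rfloor\le d$) are fine but add nothing beyond what the paper's proof already implicitly uses.
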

\begin{proof}
Consider an optimal $k$-clustering
$\hat{\mathcal{C}}=
(\hat{C}_1, \ldots, \hat{C}_k; \hat{c}_1, \ldots,
\hat{c}_k)$ of $\mathcal{G}$, with
$V=\cup_{i=1,k} \hat{C}_i$ and $\mbox{min-prob}(\hat{\mathcal{C}})
= p_{\rm opt-min}(k, \lfloor d/2 \rfloor)$.
Let $c_i$ be the center added to $S$ in the $i$-th iteration of the
{\bf for} loop of \algo{min-partial}, and
let $\hat{C}_{j_i}$ be the cluster in $\hat{\mathcal{C}}$
which
contains $c_i$, for every $i \geq 1$.
By Inequality~\ref{eq:triangle-d-connection} we have that for every node
$v \in \hat{C}_{j_i}$
\[
  \begin{aligned}
    \Pr{(c_i \simd v)} \geq&
    \Pr{(c_i \simdPar{\lfloor d/2 \rfloor} \hat{c}_{j_i})}
    \cdot
    \Pr{(\hat{c}_{j_i} \simdPar{\lfloor d/2 \rfloor} v)} \\
    \ge& p^2_{\textrm{opt-min}}(k, \lfloor d/2 \rfloor) \geq q
  \end{aligned}
\]
Therefore, at the end of the $i$-th iteration of the {\bf for} loop, $V'$
cannot contain nodes of $\hat{C}_{j_i}$. An easy induction
shows that at the end of the  {\bf for} loop, $V'$ is empty.
\end{proof}

To run Algorithm~\ref{alg:k-center} with $d$-connection probabilities, we just need to replace the invocation of \algo{min-partial} with an invocation to \algo{min-partial-d}$(\mathcal{G},k,q,\alpha,\bar{q}, d, d)$.
The following Theorem is then a direct consequence of the above lemma. 

\begin{theorem} \label{thm:d-kcenter}
Suppose that $p_{\text{opt-min}}(k, \lfloor d/2 \rfloor)>0$.
When run with $d$-connection probabilities,
Algorithm~\ref{alg:k-center} requires
at most \mbox{$\lfloor 2\log_{1+\gamma}(1/p_{\rm opt-min}(k,\lfloor d/2 \rfloor))\rfloor+1$}
executions of \algo{min-partial-d}, and
returns a $k$-clustering $\mathcal{C}$ with
\[
\mbox{min-prob}_d(\mathcal{C}) \geq \frac{p^2_{\rm opt-min}(k,\lfloor d/2 \rfloor)}{(1+\gamma)}.
\]
\end{theorem}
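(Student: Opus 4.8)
The plan is to replay, almost verbatim, the derivation of Theorem~\ref{thm:k-center} from Lemma~\ref{lem:k-center}, now drawing on Lemma~\ref{lem:k-center-limited-d} and on $d$-connection probabilities in place of the unconstrained ones. When run with $d$-connection probabilities, Algorithm~\ref{alg:k-center} invokes \algo{min-partial-d}$(\mathcal{G},k,q,1,q,d,d)$ on the decreasing sequence of guesses $q = 1, (1+\gamma)^{-1}, (1+\gamma)^{-2}, \dots$, halting at the first guess whose returned $k$-clustering covers all of $V$. The first point to record is that this \textbf{while} loop terminates: by Lemma~\ref{lem:k-center-limited-d}, applied with $\alpha = 1$ and $\bar{q} = q$, as soon as $q \le p^2_{\rm opt-min}(k,\lfloor d/2\rfloor)$ the clustering returned by \algo{min-partial-d} covers all nodes. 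This is exactly where the hypothesis $p_{\rm opt-min}(k,\lfloor d/2\rfloor) > 0$ is needed: it ensures that some strictly positive guess $q$ forces full coverage, so the loop cannot run forever.

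Next I would bound the number of executions of \algo{min-partial-d}. Since the $t$-th execution uses $q = (1+\gamma)^{-(t-1)}$, full coverage, hence termination, is guaranteed no later than the first $t$ with $(1+\gamma)^{-(t-1)} \le p^2_{\rm opt-min}(k,\lfloor d/2\rfloor)$, i.e. with $t-1 \ge 2\log_{1+\gamma}(1/p_{\rm opt-min}(k,\lfloor d/2\rfloor))$; the same elementary accounting that underlies the iteration bound of Theorem~\ref{thm:k-center} then gives the stated bound of $\lfloor 2\log_{1+\gamma}(1/p_{\rm opt-min}(k,\lfloor d/2\rfloor))\rfloor + 1$ executions.

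For the quality guarantee, let $q$ be the guess at which the loop exits. If $q = 1$, the bound is immediate, since then $p^2_{\rm opt-min}(k,\lfloor d/2\rfloor)/(1+\gamma) \le 1 = q \le \mbox{min-prob}_d(\mathcal{C})$. Otherwise $q < 1$, the previous guess $q(1+\gamma)$ failed to produce full coverage, and the contrapositive of Lemma~\ref{lem:k-center-limited-d} yields $q(1+\gamma) > p^2_{\rm opt-min}(k,\lfloor d/2\rfloor)$, hence $q > p^2_{\rm opt-min}(k,\lfloor d/2\rfloor)/(1+\gamma)$. By construction of \algo{min-partial-d}, every covered node is connected to its assigned center with $d$-connection probability at least $q$ (this is the effect of line~\ref{algoline:large-disk-limited-depth}, which, with both depth parameters set to $d$, removes from $V'$ precisely the nodes whose $d$-connection probability to the current center is at least $q$), and since in this case all nodes are covered we conclude $\mbox{min-prob}_d(\mathcal{C}) \ge q > p^2_{\rm opt-min}(k,\lfloor d/2\rfloor)/(1+\gamma)$.

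I do not expect a real obstacle: the whole argument is a mechanical transcription of the proof of Theorem~\ref{thm:k-center}, with $p_{\rm opt-min}(k)$ replaced by $p_{\rm opt-min}(k,\lfloor d/2\rfloor)$ throughout. The one spot that deserves a moment's care is the bookkeeping of the two depth parameters of \algo{min-partial-d}: one must check that calling it with both trailing arguments equal to $d$ makes the ``large'' disks use depth $d$, so that the coverage invariant ``each covered node has $d$-connection probability $\ge q$ to its center'' is available, while the factor-of-two weakening to $\lfloor d/2\rfloor$ is already absorbed inside Lemma~\ref{lem:k-center-limited-d}, whose proof splits a length-$\le d$ path through an optimal center into two subpaths of length $\le \lfloor d/2\rfloor$ and applies Inequality~\ref{eq:triangle-d-connection}.
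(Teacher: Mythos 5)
Your proposal is correct and follows exactly the route the paper takes: the paper proves Lemma~\ref{lem:k-center-limited-d} via Inequality~(\ref{eq:triangle-d-connection}) with $d_1=d_2=\lfloor d/2\rfloor$ and then declares Theorem~\ref{thm:d-kcenter} a direct consequence, by the same accounting you spell out (contrapositive of the lemma for the previous guess, coverage invariant of line~\ref{algoline:large-disk-limited-depth} for the quality bound, and the positivity hypothesis for termination). Nothing to add.
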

We remark that the assumption $p_{\text{opt-min}}(k, \lfloor d/2 \rfloor)>0$ is needed to ensure that a suitable guess of $q$ is reached in a finite number of iterations.

\begin{algorithm}[t]
  \caption{\algo{min-partial-d}$(\mathcal{G}, k, q, \alpha, \bar{q}, d, d')$  adaptation for depth-limited algorithms}
  \label{alg:min-partial-d-connection}
  \Let{$S$}{$\emptyset$}\;
  \Let{$V'$}{$V$}\;
  \For{\Let{$i$}{1} \To $k$} {
    select arbitrary $T \subseteq V'$
    with $|T| = \min\{\alpha,|V'|\}$\;
    \lFor{$(v \in T)$} {
      \nllabel{algoline:small-disk-limited-depth}
      \Let{$M_v$}{$\{u \in V' \; : \; \Pr {(u \simdPar{d'} v)} \geq \bar{q} \}$}
    }
    \Let{$c_i$}{$\argmax_{v \in T}|M_v|$}\;
    \Let{$S$}{$S \cup \{c_i\}$}\;
    \Let{$V'$}{$V'-\{u \in V' \; : \; \Pr {(u \simd c_i)} \geq q \}$}\nllabel{algoline:large-disk-limited-depth}\;
  }
  \lIf {$(|S| < k)$}
  {\\ \hspace*{0.5cm}add $k-|S|$ arbitrary nodes of $V-S$ to $S$}
  \lFor{\Let{$i$}{1} \To $k$} {
    \Let{$C_i$}{$\{u \in V-V' \: : \; c(u, S)=c_i \}$}}
  \Return
  $\mathcal{C}=(C_1,\ldots,C_k; c_1,\ldots,c_k)$\;
\end{algorithm}

Consider now the \algo{acp} problem.
For a given probability threshold $q$, and a given $d>0$, define $t_{q, d}$ as the minimum number of nodes left uncovered by any partial $k$-clustering $\mathcal{C}$ of $\mathcal{G}$ with $\mbox{min-prob}_d(\mathcal{C}) \ge q$.
In the following, $t_{q, d}$ plays a role similar to $t_q$ in Section~\ref{sec:k-median-algo}.

Specifically, we can state the following Lemma, equivalent to Lemma~\ref{lem:k-median}.

\begin{lemma}\label{lem:k-median-limited-depth}
  For any given $d>0$, There exist a value $q \in (0, 1]$ such that
  \[
    q\cdot \frac{n - t_{q, d}}{n}\ge
    \frac{p_{\textrm{opt-avg}}(k, d)}{H(n)}
  \]
  where $H(n) = \sum_{i=1}^n (1/i) = \ln n + \BO{1}$ is the $n$-th harmonic number.
\end{lemma}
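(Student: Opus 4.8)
The plan is to transcribe the proof of Lemma~\ref{lem:k-median} almost verbatim, replacing connection probabilities by $d$-connection probabilities throughout; the triangle-type Inequality~\ref{eq:triangle-d-connection} is not needed here, so no new idea is required. Concretely, I would first fix $\hat{\mathcal{C}}$ to be the $k$-clustering of $\mathcal{G}$ maximizing $\mbox{avg-prob}_d$, and let $p_0 \le p_1 \le \cdots \le p_{n-1}$ be the $d$-connection probabilities of the $n$ nodes to their cluster centers in $\hat{\mathcal{C}}$, sorted non-decreasingly, so that $\mbox{avg-prob}_d(\hat{\mathcal{C}}) = (1/n)\sum_{i=0}^{n-1} p_i = p_{\textrm{opt-avg}}(k,d)$.

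The one point that requires a line of justification is the analogue of the observation ``$t_{p_i}\le i$'': for each $0 \le i < n$, discarding from $\hat{\mathcal{C}}$ the $i$ nodes whose $d$-connection probability to their center is smallest, while keeping the same centers and the assignment of all surviving nodes, yields a partial $k$-clustering that covers $n-i$ nodes and in which every covered node is connected to its (unchanged) center through paths of length at most $d$ with probability at least $p_i$. Hence $\mbox{min-prob}_d$ of this partial clustering is at least $p_i$, and therefore $t_{p_i,d} \le i$.

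Then I would reuse the averaging-by-contradiction step unchanged. If $p_i (n-i)/n < p_{\textrm{opt-avg}}(k,d)/H(n)$ held for every $i$, summing over $i=0,\dots,n-1$ and using $\sum_{i=0}^{n-1} 1/(n-i) = H(n)$ would give
\[
p_{\textrm{opt-avg}}(k,d) = \frac{1}{n}\sum_{i=0}^{n-1} p_i < \frac{p_{\textrm{opt-avg}}(k,d)}{H(n)}\sum_{i=0}^{n-1}\frac{1}{n-i} = p_{\textrm{opt-avg}}(k,d),
\]
a contradiction. Picking an index $i$ attaining $\max_i p_i(n-i)/n$ and setting $q = p_i$, the bound $t_{q,d} \le i$ then yields $q\,(n - t_{q,d})/n \ge p_i(n-i)/n \ge p_{\textrm{opt-avg}}(k,d)/H(n)$, which is the claim. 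I do not anticipate any real obstacle: the harmonic identity and the contradiction are insensitive to which probability measure is used, and the only substantive check is the one noted above, namely that deleting the worst-served nodes leaves the $d$-connection probability of each remaining node to its center untouched.
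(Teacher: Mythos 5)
Your proposal is correct and is exactly what the paper does: its proof of this lemma simply states that the argument of Lemma~\ref{lem:k-median} goes through verbatim with $d$-connection probabilities substituted for unconstrained ones. Your explicit check that discarding the $i$ worst-served nodes from the optimal clustering certifies $t_{p_i,d}\le i$ is the only point needing care, and you handle it correctly.
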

\begin{proof}
  The proof follows by the same argument of the proof of Lemma~\ref{lem:k-median}, by considering $d$-connection probabilities anywhere unconstrained connection probabilities are used.
\end{proof}

The following lemma ensures that, for a given probability threshold, the number of uncovered nodes after an invocation of \algo{min-partial-d} is conveniently bounded.

\begin{lemma} \label{lem:outliers-limited-depth}
  For any $d>0$, and for any $q \in (0,1]$, we have that
  the partial $k$-clustering $\mathcal{C}$
  returned by \algo{min-partial-d}$(\mathcal{G},k,q^3,n,q,d,\lfloor d/3 \rfloor)$
  covers all but $t_{q, \lfloor d/3 \rfloor}$ nodes.
\end{lemma}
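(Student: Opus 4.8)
The plan is to transcribe the proof of Lemma~\ref{lem:outliers} --- equivalently, the robust $k$-center analysis of \cite[Theorem~3.1]{CharikarKMN01} --- replacing ordinary connection probabilities by $d$-connection probabilities and invoking Inequality~\ref{eq:triangle-d-connection} wherever Theorem~\ref{triangle} was used, with the depth budgets arranged so that the three ``hops'' of the charging argument fit within $d$. Write $d'=\lfloor d/3\rfloor$. I would first fix a partial $k$-clustering $\mathcal{C}^*$ with $\mbox{min-prob}_{d'}(\mathcal{C}^*)\ge q$ that leaves exactly $t_{q,d'}$ nodes uncovered, with centers $o_1,\dots,o_k$, and set $O_i=\{v:\Pr(v\simdPar{d'}o_i)\ge q\}$ to be its \emph{optimal disks}, so that $|O_1\cup\cdots\cup O_k|=n-t_{q,d'}$. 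On the algorithm side, let $c_1,\dots,c_k$ be the centers chosen by \algo{min-partial-d}$(\mathcal{G},k,q^3,n,q,d,d')$; since $\alpha=n$ forces $T=V'$ and $\bar q=q$, the disk built on line~\ref{algoline:small-disk-limited-depth} in iteration $i$ is $M_i=\{u\in V':\Pr(u\simdPar{d'}c_i)\ge q\}$ --- the same radius and depth as the optimal disks --- while the set removed from $V'$ on line~\ref{algoline:large-disk-limited-depth} is $E_i=\{u:\Pr(u\simd c_i)\ge q^3\}$. The nodes eventually covered by the returned clustering are precisely those removed from $V'$, i.e., $\bigcup_i E_i$, so it suffices to prove $|E_1\cup\cdots\cup E_k|\ge|O_1\cup\cdots\cup O_k|$.

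As in \cite{CharikarKMN01}, I would establish the stronger claim that there is a permutation $O_{\pi(1)},\dots,O_{\pi(k)}$ of the optimal disks together with, for each $i$, an injection of $O_{\pi(1)}\cup\cdots\cup O_{\pi(i)}$ into $E_1\cup\cdots\cup E_i$, built inductively. Assume the claim for $i-1$. If $(M_1\cup\cdots\cup M_i)\cap O_j\neq\emptyset$ for some $j$ not yet in the image of $\pi$, set $\pi(i)=j$ and pick $\ell\le i$ and $w\in M_\ell\cap O_j$; then $\Pr(w\simdPar{d'}c_\ell)\ge q$, $\Pr(w\simdPar{d'}o_j)\ge q$, and $\Pr(o_j\simdPar{d'}v)\ge q$ for every $v\in O_j$, so two applications of Inequality~\ref{eq:triangle-d-connection} --- legitimate since $d'+d'+d'=3\lfloor d/3\rfloor\le d$ --- yield $\Pr(c_\ell\simd v)\ge q^3$ for all $v\in O_j$, i.e., $O_j\subseteq E_\ell\subseteq E_1\cup\cdots\cup E_i$; hence every still-uncharged point of $O_{\pi(i)}$ can be charged to itself. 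Otherwise $M_1\cup\cdots\cup M_i$ is disjoint from every unassigned $O_j$; with $U=V\setminus(E_1\cup\cdots\cup E_{i-1})$, which is exactly the set $V'$ at the start of iteration $i$, set $\pi(i)$ to an unassigned index maximizing $|O_j\cap U|$. Since $O_{\pi(i)}\cap U$ is a feasible candidate disk centered at $o_{\pi(i)}\in V'$, the greedy rule on line~\ref{algoline:small-disk-limited-depth} gives $|M_i|\ge|O_{\pi(i)}\cap U|$, and $M_i\subseteq E_i$ because $\Pr(u\simd c_i)\ge\Pr(u\simdPar{d'}c_i)\ge q\ge q^3$; so the as-yet-uncharged points of $O_{\pi(i)}$ can be matched injectively to distinct points of $M_i\subseteq E_i$, and since $M_i$ meets no unassigned optimal disk, none of these targets is later claimed by a self-charge of the first case.

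Carrying the induction through $i=k$ gives $|E_1\cup\cdots\cup E_k|\ge|O_1\cup\cdots\cup O_k|=n-t_{q,d'}$, hence the returned clustering leaves at most $t_{q,\lfloor d/3\rfloor}$ nodes uncovered, which is the assertion. I expect the only step that is not a verbatim copy of the deterministic argument to be the first case of the induction: one must verify that the three-hop chain $c_\ell\to w\to o_j\to v$ still certifies membership in the $d$-bounded disk $E_\ell$, and it is precisely this requirement that pins down the depth $d'=\lfloor d/3\rfloor$ so that $3d'\le d$ (just as the exponent $q^3$ in \algo{min-partial-d}'s third argument records the three factors of $q$). Everything else --- the disjointness and no-double-charging observations and the greedy-maximality comparison --- is purely combinatorial and does not distinguish distances from depth-limited connection probabilities, so it transcribes unchanged from Lemma~\ref{lem:outliers}.
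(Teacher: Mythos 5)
Your proof is correct and takes essentially the same route as the paper: the paper's own proof simply carries over the charging argument of Lemma~\ref{lem:outliers} and modifies Case~(1) with exactly the three-hop chain $c_\ell \rightarrow x \rightarrow o_{\pi(i)} \rightarrow v$ (each hop of depth $\lfloor d/3\rfloor$ and probability at least $q$, combined via Inequality~(\ref{eq:triangle-d-connection}) to certify $\Pr(c_\ell \stackrel{d}{\sim} v)\ge q^3$) that you identify as the only non-verbatim step. The remaining combinatorial bookkeeping (Case~(2), the greedy comparison, and the no-double-charging argument) is indeed unchanged, exactly as you note.
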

\begin{proof}
  The proof of this lemma proceeds as the one of Lemma~\ref{lem:outliers}, with the  following modification of Case (1) of the induction step.
  The hypothesis of Case (1) is that $(M_1 \cup \dots \cup M_i) \cap O_j \neq \emptyset$ for some $j\in [1,k]\setminus\{\pi(1), \ldots, \pi(i-1)\}$.
 We then set $\pi(i) = j$. Let now $v \in O_{\pi(i)}$. Since there is at least one center $c_\ell$, with $\ell \in [1, i]$, such that $M_\ell \cap O_{\pi(i)} \neq \emptyset$, we have that 
  \[
    \Pr(c_\ell \simd v) \ge
    \Pr(c_\ell \simdPar{\lfloor d/3 \rfloor} x) \cdot
    \Pr(x \simdPar{\lfloor d/3 \rfloor} o_i) \cdot
    \Pr(o_i \simdPar{\lfloor d/3 \rfloor} v)
    \ge q^3
  \]
  where $x \in M_\ell \cap O_{\pi(i)}$.
  The first inequality comes from Equation~(\ref{eq:triangle-d-connection}), and the second by construction (line~\ref{algoline:small-disk-limited-depth} of Algorithm~\ref{alg:min-partial-d-connection}).
  Therefore we have that the $d$-connection probability between $c_\ell$ and any $v$ in $O_{\pi(i)}$ is greater than $q^3$, which means that we can charge every point of $O_{\pi(i)}$ to itself.
  The rest of the argument is unvaried, and the theorem follows.
\end{proof}

The adaptation of Algorithm~\ref{alg:k-median} for the \algo{acp} problem to work with $d$-connection probabilities consists in replacing the invocation to \algo{min-partial} with an invocation to \algo{min-partial-d}$(\mathcal{G},k,q^3,n,q,d, \lfloor d/3 \rfloor)$, obtaining the following theorem.

\begin{theorem}\label{thm:d-kmedian}
  When run with $d$-connection probabilities, Algorithm~\ref{alg:k-median}
  returns a $k$-clustering $\mathcal{C}$ with
  \[
    \mbox{avg-prob}_d(\mathcal{C}) \geq
    \left(\frac{p_{\rm opt-avg}(k,\lfloor d/3 \rfloor)}{(1+\gamma) H(n)}\right)^3,
  \]
  where $H(n)$ is the $n$-th harmonic number, and requires at most
  $\left\lfloor\log_{1+\gamma}(H(n)/p_{\rm opt-avg}(k,\lfloor d/3 \rfloor)) \right\rfloor + 1$
  executions of \algo{min-partial}.
\end{theorem}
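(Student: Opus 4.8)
The plan is to replay the proof of Theorem~\ref{thm:k-median} almost verbatim, swapping in the depth-limited ingredients: Lemma~\ref{lem:k-median-limited-depth} in place of Lemma~\ref{lem:k-median}, Lemma~\ref{lem:outliers-limited-depth} in place of Lemma~\ref{lem:outliers}, and $d$-connection probabilities wherever unconstrained connection probabilities appeared. First I would record the loop invariant $\mbox{avg-prob}_d(\mathcal{C}_{\mbox{best}}) \ge \phi_{\mbox{best}}$: it holds at initialization and is preserved at every update of $(\phi_{\mbox{best}},\mathcal{C}_{\mbox{best}})$, because $\phi$ is computed from a partial clustering in which uncovered nodes contribute $0$, and completing a partial clustering to a full one can only raise the average $d$-connection probability. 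Hence the invariant holds when $\mathcal{C}_{\mbox{best}}$ is returned, so it suffices to lower-bound $\phi_{\mbox{best}}$ at termination.

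Next, applying Lemma~\ref{lem:k-median-limited-depth} with depth parameter $\lfloor d/3 \rfloor$, I would fix $q^* \in (0,1]$ with $q^* \cdot (n - t_{q^*, \lfloor d/3 \rfloor})/n \ge p_{\rm opt-avg}(k, \lfloor d/3 \rfloor)/H(n)$, and split into the same two cases as before. If the while loop terminates while $q > q^*$, then $\phi_{\mbox{best}} > q^3 > (q^*)^3 \ge (p_{\rm opt-avg}(k,\lfloor d/3\rfloor)/H(n))^3$, using $n/(n-t_{q^*,\lfloor d/3\rfloor}) \ge 1$. Otherwise $q$ eventually enters the interval $(q^*/(1+\gamma), q^*]$; in the first such iteration, Lemma~\ref{lem:outliers-limited-depth} guarantees that the partial clustering $\mathcal{C}$ produced by \algo{min-partial-d}$(\mathcal{G},k,q^3,n,q,d,\lfloor d/3\rfloor)$ leaves at most $t_{q,\lfloor d/3\rfloor}$ nodes uncovered, each covered node reaching its center with $d$-connection probability at least $q^3$. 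Since $t_{q,\lfloor d/3\rfloor}$ is non-decreasing in $q$ for fixed depth — argued exactly as the monotonicity of $t_q$ in Section~\ref{sec:k-median-algo} — we get $t_{q,\lfloor d/3\rfloor} \le t_{q^*,\lfloor d/3\rfloor}$, whence
\[
\phi \;>\; q^3 \cdot \frac{n - t_{q,\lfloor d/3\rfloor}}{n}
\;\ge\; \left(\frac{q^*}{1+\gamma}\right)^3 \cdot \frac{n - t_{q^*,\lfloor d/3\rfloor}}{n}
\;\ge\; \left(\frac{p_{\rm opt-avg}(k,\lfloor d/3\rfloor)}{(1+\gamma)H(n)}\right)^3 .
\]
In either case $\phi_{\mbox{best}}$ at termination meets the claimed bound, and by the invariant so does $\mbox{avg-prob}_d(\mathcal{C}_{\mbox{best}})$.

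Finally, for the iteration count I would argue, exactly as in Theorem~\ref{thm:k-median}, that once $q$ first falls in $(q^*/(1+\gamma), q^*]$ we already have $\phi_{\mbox{best}} \ge (p_{\rm opt-avg}(k,\lfloor d/3\rfloor)/((1+\gamma)H(n)))^3$; from then on the guard $q^3 \ge \phi_{\mbox{best}}$ forces the loop to stop before $q$ drops below $p_{\rm opt-avg}(k,\lfloor d/3\rfloor)/((1+\gamma)H(n))$, and counting the divisions of $q$ by $(1+\gamma)$ from the initial value $1/(1+\gamma)$ gives the stated $\lfloor \log_{1+\gamma}(H(n)/p_{\rm opt-avg}(k,\lfloor d/3\rfloor))\rfloor + 1$ bound (termination and the degenerate case $p_{\rm opt-avg}(k,\lfloor d/3\rfloor)=0$ are handled as there). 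I expect the only delicate point to be the bookkeeping of the two depth parameters of \algo{min-partial-d}: the inner ``$\lfloor d/3\rfloor$-disks'' and the outer ``$d$-disks'' must be matched so that the uncovered count delivered by Lemma~\ref{lem:outliers-limited-depth} is measured at depth $\lfloor d/3\rfloor$ — precisely the depth at which $q^*$ is chosen in Lemma~\ref{lem:k-median-limited-depth} — while the guaranteed probability $q^3$ is realized through paths of length at most $d$, which is the quantity appearing in $\mbox{avg-prob}_d$. No inequality beyond those already established is needed.
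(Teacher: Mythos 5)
Your proposal is correct and follows essentially the same route as the paper's own proof: it replays the argument of Theorem~\ref{thm:k-median} with the loop invariant $\mbox{avg-prob}_d(\mathcal{C}_{\mbox{best}}) \ge \phi_{\mbox{best}}$, the value $q^*$ supplied by Lemma~\ref{lem:k-median-limited-depth}, the two-case analysis on whether $q$ reaches $(q^*/(1+\gamma), q^*]$, and the uncovered-node bound $t_{q,\lfloor d/3\rfloor}$ from Lemma~\ref{lem:outliers-limited-depth}. Your closing remark about matching the inner $\lfloor d/3\rfloor$-disks to the outer $d$-disks is exactly the caveat the paper itself highlights, so no further work is needed.
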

\begin{proof}
  This proof is structured as the proof of Theorem~\ref{thm:k-median}, with the caveat that, for a given $q$, the clustering returned by Algorithm~\ref{alg:min-partial-d-connection} has radius $q^3$ in terms of $d$-connection probability, whereas the number of uncovered nodes is limited by $t_{q, \lfloor d/3 \rfloor}$.
  
  Note that the while loop of Algorithm~\ref{alg:k-median} maintains, as an  invariant, the relation
  $\mbox{avg-prob}_d(\mathcal{C}_{\mbox{best}}) \geq  \phi_{\mbox{best}}$, where $\phi_{\mbox{best}}$ is defined using $d$-connection probabilities.
  Hence, this relation holds at the end of the algorithm when the $k$-clustering $\mathcal{C}_{\mbox{best}}$ is returned.
  Let $q^* \in (0,1]$ be a value such that
  \[
    q^* \cdot {n-t_{q^*, \lfloor d/3 \rfloor} \over n} \geq {p_{\textrm{opt-avg}}(k, \lfloor d/3 \rfloor) \over H(n)}.
  \]
  The existence of $q^*$ is ensured by Lemma~\ref{lem:k-median-limited-depth}.
  If the while loop ends when $q > q^*$, then
  \begin{eqnarray*}
    \phi_{\mbox{best}}
    & > & q^3 > (q^*)^3
      \geq
      \left(
      {n \over n-t_{q^*}}
      {p_{\rm opt-avg}(k, \lfloor d/3 \rfloor) \over H(n)}
      \right)^3 \\
    & \geq &
      \left(
      {p_{\rm opt-avg}(k, \lfloor d/3 \rfloor) \over H(n)}
      \right)^3.
  \end{eqnarray*}
  If instead $q$ becomes $\leq q^*$,
  consider the first iteration of the while loop when this happens, that is when
  $q^*/(1+\gamma) < q \leq q^*$ and let $\mathcal{C}$
  be the partial $k$-clustering computed in the iteration. By
  Lemma~\ref{lem:outliers-limited-depth}, at most $t_{q, \lfloor d/3 \rfloor}$ nodes are not
  covered by $\mathcal{C}$ and since $t_{q, \lfloor d/3 \rfloor}$ is non-decreasing, as observed before, we have that $t_{q, \lfloor d/3 \rfloor} < t_{q^*, \lfloor d/3 \rfloor}$. This implies that
  the value $\phi$ derived from $\mathcal{C}$ (hence, $\phi_{\mbox{best}}$
  at the end of the iteration) is such that
  \begin{eqnarray*}
    \phi & > & q^3 \cdot {n-t_{q, \lfloor d/3 \rfloor} \over n}
        \quad \geq \quad
        \left({q^* \over 1+\gamma}\right)^3 \cdot {n-t_{q^*, \lfloor d/3 \rfloor} \over n} \\
      &\geq &
        \left(
        {n \over n-t_{q^*, \lfloor d/3 \rfloor}}
        {p_{\rm opt-avg}(k, \lfloor d/3 \rfloor) \over (1+\gamma) H(n)}
        \right)^3 \cdot {n-t_{q^*, \lfloor d/3 \rfloor} \over n}  \\
      & \geq &
        \left(
        {p_{\rm opt-avg}(k, \lfloor d/3 \rfloor) \over (1+\gamma) H(n)}
        \right)^3.
  \end{eqnarray*}
  In all cases, the average connection probability of the returned clustering satisfies the
  stated bound. As for the upper bound on the number of iterations of the while
  loop, we proved above that as soon as $q$ falls in the interval
  $(q^*/(1+\gamma), q^*]$ we have
  $\phi_{\mbox{best}} \geq (p_{\rm opt-avg}(k, \lfloor d/3 \rfloor)/((1+\gamma) H(n)))^3$,
  hence, from that point on, $q$ cannot become smaller than
  $p_{\rm opt-avg}(k, \lfloor d/3 \rfloor)/((1+\gamma) H(n)) < q^*$.
  This implies that $\left\lfloor\log_{1+\gamma}(H(n)/p_{\rm opt-avg}(k, \lfloor d/3 \rfloor))\right\rfloor + 1$
  iterations of the while loop are executed overall.
  Termination is always guaranteed since $p_{\rm opt-avg}(k,\lfloor d/3 \rfloor) \geq k/n>0$, for every $d \geq 0$.
\end{proof}

\else 

Suppose that we modify
Algorithm~\ref{alg:outliers} to employ $d$-connection probabilities
rather than the unconstrained connection
probabilities, anywhere the latter are required.
Then we can run Algorithms~\ref{alg:k-center} and \ref{alg:k-median}, to obtain the results stated in the following two theorems.

\begin{theorem} \label{thm:d-kcenter}
\sloppy Suppose that $p_{\text{opt-min}}(k, \lfloor d/2 \rfloor)>0$.
When run with $d$-connection probabilities,
Algorithm~\ref{alg:k-center} requires
at most \mbox{$\lfloor 2\log_{1+\gamma}(1/p_{\rm opt-min}(k,\lfloor d/2 \rfloor))\rfloor+1$}
executions of \algo{min-partial}, and
returns a $k$-clustering $\mathcal{C}$ with
\[
\mbox{min-prob}_d(\mathcal{C}) \geq \frac{p^2_{\rm opt-min}(k,\lfloor d/2 \rfloor)}{(1+\gamma)}.
\]
\end{theorem}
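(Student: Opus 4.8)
The plan is to follow verbatim the route that produced Theorem~\ref{thm:k-center} — Lemma~\ref{lem:k-center} feeding the geometric search of Algorithm~\ref{alg:k-center} — but with the triangle inequality of Theorem~\ref{triangle} replaced by its depth-limited counterpart, Inequality~(\ref{eq:triangle-d-connection}). Throughout, \algo{min-partial} denotes the variant in which every reference to a connection probability in Algorithm~\ref{alg:outliers} is read as a $d$-connection probability.

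The first step I would carry out is the depth-limited analogue of Lemma~\ref{lem:k-center}: if $q\le p^2_{\rm opt-min}(k,\lfloor d/2\rfloor)$, then \algo{min-partial}$(\mathcal{G},k,q,\alpha,\bar q)$ run with $d$-connection probabilities returns a \emph{full} $k$-clustering. The argument mirrors that of Lemma~\ref{lem:k-center} line by line. Fix an optimal clustering $\hat{\mathcal{C}}=(\hat C_1,\dots,\hat C_k;\hat c_1,\dots,\hat c_k)$ attaining the depth-$\lfloor d/2\rfloor$ min-probability $p_{\rm opt-min}(k,\lfloor d/2\rfloor)$, let $c_i$ be the center picked in iteration $i$, and let $\hat C_{j_i}$ be the optimal cluster containing $c_i$. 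Since $\lfloor d/2\rfloor+\lfloor d/2\rfloor\le d$, Inequality~(\ref{eq:triangle-d-connection}) gives, for every $v\in\hat C_{j_i}$,
\[
\Pr(c_i\simd v)\ge\Pr(c_i\simdPar{\lfloor d/2\rfloor}\hat c_{j_i})\cdot\Pr(\hat c_{j_i}\simdPar{\lfloor d/2\rfloor}v)\ge p^2_{\rm opt-min}(k,\lfloor d/2\rfloor)\ge q,
\]
so every node of $\hat C_{j_i}$ is removed from $V'$ by the end of iteration $i$; the same induction as in Lemma~\ref{lem:k-center} then yields $V'=\emptyset$ after $k$ iterations, and each covered node is $d$-connected to its center with probability at least $q$.

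Given this lemma, the theorem follows as in the proof of Theorem~\ref{thm:k-center}. In Algorithm~\ref{alg:k-center} run with $d$-connection probabilities, $q$ starts at $1$ and is divided by $(1+\gamma)$ at every call of \algo{min-partial} that fails to cover all nodes; by the lemma the loop stops no later than the first call in which $q\le p^2_{\rm opt-min}(k,\lfloor d/2\rfloor)$. If the loop halts at a call using value $q$, then either $q=1$, or the immediately preceding call used $(1+\gamma)q$ and failed, whence $(1+\gamma)q>p^2_{\rm opt-min}(k,\lfloor d/2\rfloor)$; in both cases $q>p^2_{\rm opt-min}(k,\lfloor d/2\rfloor)/(1+\gamma)$, and since every covered node of the returned $\mathcal{C}$ is $d$-connected to its center with probability at least $q$, we get $\mbox{min-prob}_d(\mathcal{C})\ge p^2_{\rm opt-min}(k,\lfloor d/2\rfloor)/(1+\gamma)$. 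For the iteration count, $q$ drops by a factor $(1+\gamma)$ per failed call and the loop is guaranteed to terminate once $q\le p^2_{\rm opt-min}(k,\lfloor d/2\rfloor)$, so by the same tally as in Theorem~\ref{thm:k-center} at most $\lfloor 2\log_{1+\gamma}(1/p_{\rm opt-min}(k,\lfloor d/2\rfloor))\rfloor+1$ executions occur; this count is finite precisely because of the hypothesis $p_{\rm opt-min}(k,\lfloor d/2\rfloor)>0$.

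I expect the only delicate point to be the path-length bookkeeping: splitting a depth-$d$ path from $c_i$ to $v$ at the optimal center $\hat c_{j_i}$ forces both halves to have length at most $\lfloor d/2\rfloor$, which is why the guarantee weakens from $p_{\rm opt-min}(k,d)$ to $p_{\rm opt-min}(k,\lfloor d/2\rfloor)$, and one must notice the elementary bound $\lfloor d/2\rfloor+\lfloor d/2\rfloor\le d$ that licenses Inequality~(\ref{eq:triangle-d-connection}). Everything else is a mechanical transcription of the proofs of Lemma~\ref{lem:k-center} and Theorem~\ref{thm:k-center}, reading each $\Pr(x\sim y)$ as $\Pr(x\simd y)$ in the $V'$-shrinking step and as $\Pr(x\simdPar{\lfloor d/2\rfloor}y)$ on the optimal-clustering side.
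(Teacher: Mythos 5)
Your proposal is correct and follows essentially the same route as the paper: the paper also proves a depth-limited analogue of Lemma~\ref{lem:k-center} (with $q\le p^2_{\rm opt-min}(k,\lfloor d/2\rfloor)$ guaranteeing full coverage, via Inequality~(\ref{eq:triangle-d-connection}) applied with $d_1=d_2=\lfloor d/2\rfloor$) and then derives the theorem exactly as in Theorem~\ref{thm:k-center}. The only cosmetic difference is that the paper names the depth-limited subroutine \algo{min-partial-d} with an extra parameter $d'$ (set to $d$ here), which matches your reading of ``every connection probability replaced by a $d$-connection probability.''
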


\begin{theorem}\label{thm:d-kmedian}
When run with $d$-connection probabilities,
Algorithm~\ref{alg:k-median}
returns a $k$-clustering $\mathcal{C}$ with
\[
\mbox{avg-prob}_d(\mathcal{C}) \geq
\left(\frac{p_{\rm opt-avg}(k,\lfloor d/3 \rfloor)}{(1+\gamma) H(n)}\right)^3,
\]
where $H(n)$ is the $n$-th harmonic number, and requires at most
$\left\lfloor\log_{1+\gamma}(H(n)/p_{\rm opt-avg}(k,\lfloor d/3 \rfloor)) \right\rfloor + 1$
executions of \algo{min-partial}.
\end{theorem}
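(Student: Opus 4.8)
The plan is to transcribe the proof of Theorem~\ref{thm:k-median} essentially verbatim, replacing every connection probability by the corresponding $d$-connection probability and substituting the depth-limited counterparts of Lemma~\ref{lem:k-median} and Lemma~\ref{lem:outliers}. Running Algorithm~\ref{alg:k-median} with $d$-connection probabilities means that each call to \algo{min-partial} becomes a call to its depth-limited variant, in which the greedy ``small disks'' are built from $\lfloor d/3\rfloor$-connection probabilities while the ``large disks'' removed from $V'$ are built from $d$-connection probabilities. The while loop then maintains the invariant $\mbox{avg-prob}_d(\mathcal{C}_{\mbox{best}})\ge\phi_{\mbox{best}}$, with $\phi_{\mbox{best}}$ computed through $d$-connection probabilities, so it suffices to lower bound $\phi_{\mbox{best}}$ at termination and to count the loop iterations.

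First I would invoke the depth-limited analog of Lemma~\ref{lem:k-median} to fix a threshold $q^*\in(0,1]$ with $q^*\,(n-t_{q^*,\lfloor d/3\rfloor})/n \ge p_{\rm opt-avg}(k,\lfloor d/3\rfloor)/H(n)$, where $t_{q,\lfloor d/3\rfloor}$ denotes the minimum number of nodes left uncovered by any partial $k$-clustering with $\mbox{min-prob}_{\lfloor d/3\rfloor}\ge q$, and then split into the same two cases as in Theorem~\ref{thm:k-median}. If the loop halts while $q>q^*$, then $\phi_{\mbox{best}} > q^3 > (q^*)^3 \ge (p_{\rm opt-avg}(k,\lfloor d/3\rfloor)/H(n))^3$ using $n/(n-t_{q^*,\lfloor d/3\rfloor})\ge 1$. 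Otherwise $q$ eventually falls in the interval $(q^*/(1+\gamma),\,q^*]$; in the first such iteration the depth-limited analog of Lemma~\ref{lem:outliers} guarantees that the partial clustering $\mathcal{C}$ computed there leaves at most $t_{q,\lfloor d/3\rfloor}$ nodes uncovered, each covered node having $d$-connection probability at least $q^3$ to its center. Monotonicity of $t_{\,\cdot\,,\lfloor d/3\rfloor}$ together with $q\le q^*$ gives $t_{q,\lfloor d/3\rfloor}\le t_{q^*,\lfloor d/3\rfloor}$, so $\phi \ge q^3(n-t_{q,\lfloor d/3\rfloor})/n \ge (q^*/(1+\gamma))^3(n-t_{q^*,\lfloor d/3\rfloor})/n \ge (p_{\rm opt-avg}(k,\lfloor d/3\rfloor)/((1+\gamma)H(n)))^3$, which is the claimed bound. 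For the iteration count, the same computation shows that as soon as $q$ first enters $(q^*/(1+\gamma),q^*]$ we have $\phi_{\mbox{best}} \ge (p_{\rm opt-avg}(k,\lfloor d/3\rfloor)/((1+\gamma)H(n)))^3$, so the guard $q^3\ge\phi_{\mbox{best}}$ must fail once $q<p_{\rm opt-avg}(k,\lfloor d/3\rfloor)/((1+\gamma)H(n))$; since $q$ starts at $1/(1+\gamma)$ and is divided by $1+\gamma$ each iteration, this yields the stated $\lfloor\log_{1+\gamma}(H(n)/p_{\rm opt-avg}(k,\lfloor d/3\rfloor))\rfloor+1$ bound, and termination is in fact unconditional since $p_{\rm opt-avg}(k,\lfloor d/3\rfloor)\ge k/n>0$.

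The only genuinely new ingredient, and the step I expect to be the main obstacle, is the depth-limited analog of Lemma~\ref{lem:outliers}: one must verify that the three-hop charging chain of Case~(1) in the charging argument borrowed from \cite{CharikarKMN01} still goes through under the depth budget. When a greedy small disk meets an optimal disk $O_{\pi(i)}$ at a witness $x$, with greedy center $c_\ell$ and optimal center $o_i$, one needs $\Pr(c_\ell\simd v)\ge\Pr(c_\ell\simdPar{\lfloor d/3\rfloor}x)\cdot\Pr(x\simdPar{\lfloor d/3\rfloor}o_i)\cdot\Pr(o_i\simdPar{\lfloor d/3\rfloor}v)\ge q^3$ for every $v\in O_{\pi(i)}$; this is exactly two applications of the $d\ge d_1+d_2$ form of the triangle inequality in Equation~(\ref{eq:triangle-d-connection}) combined with $3\lfloor d/3\rfloor\le d$, which is precisely why both the greedy small disks and the optimal disks must be taken at depth $\lfloor d/3\rfloor$ rather than at unrestricted depth. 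Case~(2) of that argument is unaffected, since it relies only on the greedy maximality of the chosen small disk, and the remainder of the proof is a mechanical transcription of the unbounded-depth version; the depth-limited analog of Lemma~\ref{lem:k-median} is likewise obtained from its proof by the same substitution of $d$-connection probabilities throughout.
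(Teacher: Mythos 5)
Your proposal is correct and follows essentially the same route as the paper: the paper's proof likewise transcribes the argument of Theorem~\ref{thm:k-median} with $t_{q,\lfloor d/3\rfloor}$ in place of $t_q$, invokes depth-limited analogs of Lemmas~\ref{lem:k-median} and~\ref{lem:outliers}, and proves the latter by exactly the three-hop charging chain $\Pr(c_\ell\simd v)\ge\Pr(c_\ell\simdPar{\lfloor d/3\rfloor}x)\cdot\Pr(x\simdPar{\lfloor d/3\rfloor}o_i)\cdot\Pr(o_i\simdPar{\lfloor d/3\rfloor}v)\ge q^3$ via two applications of Inequality~(\ref{eq:triangle-d-connection}) that you identify as the only genuinely new ingredient. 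The case analysis, iteration count, and the termination remark via $p_{\rm opt-avg}(k,\lfloor d/3\rfloor)\ge k/n>0$ all match the paper's proof.
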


We remark that the assumption $p_{\text{opt-min}}(k, \lfloor d/2
\rfloor)>0$ in Theorem~\ref{thm:d-kcenter} is required
to ensure that, in Algorithm~\ref{alg:k-center},
a suitable guess for $q$ is reached in a finite number of
iterations. Termination is instead always guaranteed for
Algorithm~\ref{alg:k-median} since
$p_{\rm opt-avg}(k,\lfloor d/3 \rfloor) \geq k/n>0$,
for every $d \geq 0$.
The proofs of the above theorems follow easily by virtually the same
arguments used in the case of unconstrained connection probabilities and are
therefore omitted for brevity.
\fi


\section{Implementing the oracle}
\label{sec:implementation}

In the previous section we assumed that the probabilities $\Pr(u \sim
v)$ could be obtained exactly from an oracle. In practice, the
estimation of these probabilities is the most critical part for the
efficient implementation of our algorithms. In this section, we show
how to integrate the Monte Carlo sampling method for the estimation
of the connection probabilities within the algorithms described in
Section~\ref{sec:algs}, maintaining similar guarantees on the quality
of the returned clusterings. The basic idea of our approach is
to adjust the number of samples dynamically during the execution of
the algorithms, based on safe guesses of the probabilities that
need to be estimated.

For ease of presentation, throughout this section we assume that lower
bounds to $p^2_{\rm opt-min}(k)$ and to $(p_{\rm opt-avg}(k)/H(n))^3$
are available. We will denote both lower bounds by $p_L$, since it will be
clear from the context which one is used. 
For example, these lower bounds can be obtained by observing that
$p_{\rm opt-min}(k)$ is greater than or equal to the probability of the
most unlikely world, and $p_{\rm opt-avg}(k) \geq k/n$. In practice,
$p_L$ can be employed as a threshold set by
the user to exclude a priori clusterings with low values of the
objective function. In this case, if the algorithm does
not find a clustering whose objective function is above the threshold, it
terminates by reporting that no clustering could be found. 
Recall that $\tilde{p}(u,v)$ denotes the estimate of the probability
$\Pr(u \sim v)$ obtained by sampling possible worlds (see
Equation~\ref{eq:estimator}). Moreover, for a node $u \in V$ and a set
of nodes $S \subset V$, we define $\tilde{c}(u, S) = \argmax_{c\in S}
\{ \tilde{p}(c, u) \}$ as the function returning the node of $S$
connected to $u$ with the highest \emph{estimated} probability.
Similarly, we define $\tilde{\pi}(u,S)=\Pr(\tilde{c}(u,S) \sim u) = \max_{c\in S}
\{ \tilde{p}(c, u) \}$. We use $\epsilon > 0$ to denote an approximation
parameter to be fixed by the user.

\subsection{Partial clustering}
A key component of Algorithms~\algo{mcp} and
\algo{acp} is the \algo{min-partial} subroutine
(Algorithm~\ref{alg:outliers}), whose input includes two thresholds
$q$ and $\bar{q}$ for the connection probabilities.  Note that in each
invocation of \algo{min-partial} within \algo{mcp}
and \algo{acp}, we have that $\bar{q} \geq q$, and that
only connection probabilities not smaller than $q$ are needed. 
We implement \algo{min-partial} as follows. 
Suppose we estimate connection probabilities
using a number $r$ of samples, 
based on Equation~(\ref{eq:num-samples}), 
which ensures that any $\Pr(u \sim v) \geq q$
is estimated with relative error
at most $\epsilon/2$ with probability at least $1-\delta$,
where $\epsilon, \delta \in (0,1)$ are suitable
values. Then, in each of the $k$
iterations of the main for-loop of \algo{min-partial}, a new center
$c$ is selected which maximizes the number of uncovered nodes $u$ with
$\tilde{p}(c, u) \ge (1 - \frac{\epsilon}{2})\bar{q}$, and all nodes
$u$ with $\tilde{p}(c, u) \ge (1 - \frac{\epsilon}{2})q$ are removed
from the set $V'$ of uncovered nodes. The following two subsections
analyze the quality of the clusterings returned by
Algorithms~\algo{mcp} and \algo{acp} when using this
implementation of \algo{min-partial}.

\subsection{Implementation of MCP}
\label{sec:k-center-impl}
Recall that Algorithm~\algo{mcp} 
invokes \algo{min-partial} with a probability threshold
$q$ which is lowered at each iteration 
of its main while loop.
Using the implementation of \algo{min-partial}
described before, this iterative adjustment of $q$
corresponds to a progressive sampling strategy.
In particular, if for each iteration of the while loop we use
a number of samples 
\begin{equation}\label{eq:sequential-samples}
r = 
\left\lceil
\frac{12}{q \varepsilon^2} \ln \left(2 n^3\left(1+\left\lfloor \log_{1+\gamma} \frac{1}{p_L} \right\rfloor \right) \right)
\right\rceil,
\end{equation}
we obtain the following result.
\begin{theorem}
\label{thm:seq1}
The implementation of \algo{mcp} terminates after at most
\mbox{$\lfloor 2\log_{1+\gamma}(1/p_{\rm opt-min}(k))\rfloor+1$} iterations of the while loop and returns a clustering $\tilde{\mathcal{C}}$ with
\[
\text{min-prob}(\tilde{\mathcal{C}}) \geq
\frac{(1-\epsilon)}{(1+\gamma)} p^2_{\rm opt-min}(k)
\]
if $p^2_{\rm opt-min}(k) > p_L$, with high probability.
\end{theorem}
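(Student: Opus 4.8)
The plan is to condition on a single high-probability \emph{good event} $\mathcal{E}$ and then, under $\mathcal{E}$, replay almost verbatim the oracle-case arguments of Lemma~\ref{lem:k-center} and Theorem~\ref{thm:k-center} with the Monte Carlo estimates $\tilde p(\cdot,\cdot)$ in place of the exact probabilities. Let $q_1=1>q_2>\cdots$ (with $q_{j+1}=q_j/(1+\gamma)$) be the thresholds tried in successive iterations of the while loop, each run with $r=r_j$ samples as prescribed by~(\ref{eq:sequential-samples}), and let $J=\lfloor\log_{1+\gamma}(1/p_L)\rfloor+1$ be the a~priori bound on the number of iterations (legitimate because $p^2_{\rm opt-min}(k)>p_L$ forces termination before $q$ drops below $p_L$). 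I would define $\mathcal{E}$ to be the event that, for every iteration $j\le J$ and every pair $u,v\in V$: (a) if $\Pr(u\sim v)\ge q_j$, then $|\tilde p(u,v)-\Pr(u\sim v)|\le(\epsilon/2)\Pr(u\sim v)$; and (b) if $\Pr(u\sim v)<(1-\epsilon)q_j$, then $\tilde p(u,v)<(1-\epsilon/2)q_j$.

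First I would check $\Pr(\mathcal{E})=1-o(1)$. Part~(a) is precisely the $(\epsilon/2,\delta)$-guarantee of~(\ref{eq:eps-delta-approx})--(\ref{eq:num-samples}): for a pair with $\Pr(u\sim v)\ge q_j$ the choice of $r_j$ in~(\ref{eq:sequential-samples}) drives the failure probability below $1/(n^3J)$. For part~(b) I would use that $\Pr[\mathrm{Bin}(r_j,p)\ge(1-\epsilon/2)q_jr_j]$ is non-decreasing in $p$, so it suffices to bound it at $p=(1-\epsilon)q_j$, where a standard multiplicative Chernoff bound with relative deviation $\tfrac{\epsilon/2}{1-\epsilon}$ about the mean $(1-\epsilon)q_jr_j$ again gives failure probability below $1/(n^3J)$ with the constant in~(\ref{eq:sequential-samples}). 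A union bound over the at most $\binom{n}{2}$ pairs estimated in each of the at most $J$ iterations then yields $\Pr(\mathcal{E})\ge1-O(1/n)$.

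Working under $\mathcal{E}$, I would next recover the iteration count and the coverage property. Fix an optimal $k$-clustering $\hat{\mathcal{C}}$ with $\mbox{min-prob}(\hat{\mathcal{C}})=p_{\rm opt-min}(k)$. For any iteration $j$ with $q_j\le p^2_{\rm opt-min}(k)$ (such a $j$ exists and, by $p^2_{\rm opt-min}(k)>p_L$, satisfies $j\le J$), the argument of Lemma~\ref{lem:k-center} goes through: whenever the $i$-th center $c_i$ picked by \algo{min-partial} lies in an optimal cluster $\hat C_\ell$, Theorem~\ref{triangle} gives $\Pr(c_i\sim v)\ge p^2_{\rm opt-min}(k)\ge q_j$ for every $v\in\hat C_\ell$, so by part~(a) $\tilde p(c_i,v)\ge(1-\epsilon/2)q_j$ and $v$ is removed from $V'$; the induction of Lemma~\ref{lem:k-center} then empties $V'$ and the algorithm halts. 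Hence the algorithm stops at some iteration $j^\star$ with $q_{j^\star}>p^2_{\rm opt-min}(k)/(1+\gamma)$ — if $j^\star>1$ this is because iteration $j^\star-1$ failed to cover all nodes, which by the above forces $q_{j^\star-1}>p^2_{\rm opt-min}(k)$ — and $j^\star$ is at most the first index with $q_j\le p^2_{\rm opt-min}(k)$, i.e.\ at most $\lfloor 2\log_{1+\gamma}(1/p_{\rm opt-min}(k))\rfloor+1$, matching Theorem~\ref{thm:k-center}.

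Finally, to bound the quality of the returned $\tilde{\mathcal{C}}$, I would observe that every node $u$ is covered, i.e.\ assigned to a center $c$ with $\tilde p(c,u)\ge(1-\epsilon/2)q_{j^\star}$; by part~(b) of $\mathcal{E}$ this rules out $\Pr(c\sim u)<(1-\epsilon)q_{j^\star}$, so $\Pr(c\sim u)\ge(1-\epsilon)q_{j^\star}$. Minimizing over $u$ and using $q_{j^\star}>p^2_{\rm opt-min}(k)/(1+\gamma)$ gives $\mbox{min-prob}(\tilde{\mathcal{C}})\ge(1-\epsilon)q_{j^\star}\ge\frac{1-\epsilon}{1+\gamma}p^2_{\rm opt-min}(k)$, as claimed. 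The main obstacle is part~(b) of the good event: the guarantee in~(\ref{eq:eps-delta-approx}) only controls the estimates of pairs whose true connection probability is at least the current threshold, whereas correctness of the node-to-center \emph{assignment} hinges on also ruling out that the estimate of a genuinely low-probability pair spuriously exceeds $(1-\epsilon/2)q$; establishing this upper-tail bound, and calibrating the constant in~(\ref{eq:sequential-samples}) so that (a) and (b) hold simultaneously for all pairs across all $\le J$ iterations, is where the real care is needed, although each individual estimate is then handled by a routine Chernoff/union-bound computation.
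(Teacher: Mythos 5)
Your proposal is correct and follows essentially the same route as the paper's proof: the same two per-pair estimation properties (covering nodes with true probability above the threshold, and excluding spurious assignments of pairs below $(1-\epsilon)q$), the same union bound over pairs and the $1+\lfloor\log_{1+\gamma}(1/p_L)\rfloor$ iterations, and the same replay of Lemma~\ref{lem:k-center} to show termination once $q\le p^2_{\rm opt-min}(k)$, yielding the bound $(1-\epsilon)q_{j^\star}\ge\frac{1-\epsilon}{1+\gamma}p^2_{\rm opt-min}(k)$. Your explicit treatment of the upper-tail bound for low-probability pairs (via monotonicity of the binomial tail) is a point the paper asserts without detail, but it does not change the structure of the argument.
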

\begin{proof}
Consider an arbitrary iteration of the while loop of
\algo{mcp}, and a pair of nodes
$u, v \in V$.  
For
\[
\delta = 
\frac{1}{n^3 \left(1+\left\lfloor\log_{1+\gamma} \frac{1}{p_L}\right\rfloor\right)},
\]
we have that using the number of samples specified by
Equation~(\ref{eq:sequential-samples}),
the following properties for the estimate $\tilde{p}(u,v)$ hold:
\begin{itemize}
\item if $\Pr(u \sim v) > q$, then $\tilde{p}(u, v) < (1-\frac{\epsilon}{2})q$ with probability $< \delta$.
\item if $\Pr(u \sim v) < (1-\epsilon) q$, then $\tilde{p}(u, v) \ge
  (1-\frac{\epsilon}{2})q$ with probability $< \delta$.
\end{itemize}
Moreover, note that \algo{mcp} performs at most $1 + \lfloor
\log_{1+\gamma} \frac{1}{p_L} \rfloor$ iterations of the while
loop. Therefore, by union bound on the
number of node pairs and the number of iterations, we have that the
following holds with probability at least $1-1/n$: in each iteration
of the while loop  every node connected to some center with probability
$\ge q$ is added to a cluster, and no cluster contains nodes whose
connection probability to the center is $< (1-\epsilon)q$. 
Consider now the $\ell$-th iteration, with
$\ell = \lfloor 2 \log_{1+\gamma}(1/p_{\rm opt-min}(k))\rfloor+1$,
in which we have $q \le p_{\rm opt-min}^2(k)$.
In this iteration, the algorithm
completes and returns a clustering.
Since at the beginning of this iteration we have
\mbox{$q > p^2_{\rm opt-min}(k)/(1+\gamma)$},
by  the above discussion we have that
  \[
    \text{min-prob}(\tilde{\mathcal{C}}) \ge (1-\epsilon)q \ge
    \frac{(1-\epsilon)}{(1+\gamma)} p^2_{\rm opt-min}(k)
  \]
with probability at least $1-1/n$, and the theorem follows
\end{proof}
We observe that, for constant $\gamma$, the overall number of
samples required by our implementation is $\BO{(1/(p_{\rm
    opt-min}(k)\epsilon)^2) (\log n+\log\log (1/p_L))}$.  As we
mentioned before, $p_L$ can be safely set equal to the probability of
the most unlikely world. In case this lower bound were too small, a
progressive sampling schedule similar to the one adopted in
\cite{pietracaprina2010mining} could be used where $p_L$ is not
required, which enures that $\BO{(1/(p_{\rm opt-min}(k)\epsilon)^2)
  (\log n+\log (1/p_{\rm opt-min}(k))}$  samples suffice. More
details will be provided in the full version of the paper.

\subsection{Implementation of ACP}
Algorithm~\algo{acp} also uses a probability threshold $q$ which is
lowered at each iteration of its main while loop, and in
each iteration it needs to
estimate reliably probabilities that are at least $q^3$. 
Again, we use the implementation of \algo{min-partial} described
before and in each iteration of the while loop we
set the number of samples as
\begin{equation}
\label{eq:samples-k-median}
r = 
\left\lceil
\frac{12}{q^3\epsilon^2}\ln\left(2 n^3\left(
      1 + \left\lfloor \log_{1+\gamma} \frac{H(n)}{p_L} \right\rfloor
    \right)
  \right)
\right\rceil.
\end{equation}
We have the following result, whose proof, analogous to that of Theorem~\ref{thm:seq1}, is omitted for brevity.
\begin{theorem} \label{thm:impl-kmedian}
The implementation of \algo{acp} terminates after at most $\lfloor\log_{1+\gamma}(H(n)/p_{\rm opt-avg}(k)) \rfloor + 1$ iterations of the while loop and returns a clustering $\tilde{\mathcal{C}}$ with
  \[
    \text{avg-prob}(\tilde{\mathcal{C}}) \ge
    (1-\epsilon) \left(
      \frac{p_{\rm opt-avg}(k)}{(1+\gamma)H(n)}
    \right)^3
  \]
  if $(p_{\rm opt-avg}(k)/H(n))^3 \ge p_L$, with high probability.
\end{theorem}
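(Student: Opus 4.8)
\emph{Proof plan.} The idea is to follow the template of the proof of Theorem~\ref{thm:seq1}, but with the single threshold $q$ used there replaced by the pair of thresholds $(q^3,q)$ that \algo{min-partial} receives when it is invoked from \algo{acp}, and then to feed the resulting ``noisy'' partial clusterings into the oracle-based analysis of Theorem~\ref{thm:k-median}. Concretely, I need two ingredients: (i) a concentration step showing that, with high probability, every estimate $\tilde{p}(u,v)$ used by the algorithm is accurate enough relative to the current threshold; and (ii) a verification that, conditioned on this good event, each call \algo{min-partial}$(\mathcal{G},k,q^3,n,q)$ inside \algo{acp} still leaves at most $t_q$ nodes uncovered and guarantees true connection probability at least $(1-\epsilon)q^3$ from every covered node to its assigned center. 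Given (i) and (ii), the final bound follows by reproducing verbatim the case analysis of Theorem~\ref{thm:k-median}, carrying an extra factor $(1-\epsilon)$ throughout.

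For (i) I would set $\delta = 1/\bigl(n^3(1+\lfloor\log_{1+\gamma}(H(n)/p_L)\rfloor)\bigr)$ and apply Equation~(\ref{eq:num-samples}) with relative error $\epsilon/2$: the sample size $r$ of Equation~(\ref{eq:samples-k-median}) then guarantees, for a fixed iteration of the while loop and a fixed pair $u,v\in V$, that $\Pr(u\sim v)>q^3$ implies $\tilde{p}(u,v)\ge(1-\tfrac{\epsilon}{2})q^3$ except with probability $<\delta$, and $\Pr(u\sim v)<(1-\epsilon)q^3$ implies $\tilde{p}(u,v)<(1-\tfrac{\epsilon}{2})q^3$ except with probability $<\delta$; since $r$ grows as $1/q^3\ge 1/q$, the same two implications hold with $q^3$ replaced by the larger threshold $q$. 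A union bound over the at most $n^2$ node pairs and the at most $1+\lfloor\log_{1+\gamma}(H(n)/p_L)\rfloor$ iterations of the while loop then yields a single good event $\mathcal{E}$, holding with probability at least $1-1/n$, under which all estimated ``small'' and ``large'' disks behave as above in every iteration.

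For (ii), the claim about true connection probability is immediate: under $\mathcal{E}$, a node is removed from $V'$ only when its estimate to the current center is $\ge(1-\tfrac{\epsilon}{2})q^3$, hence its true connection probability is $\ge(1-\epsilon)q^3$. The claim that at most $t_q$ nodes remain uncovered requires re-running the charging argument of Lemma~\ref{lem:outliers} (i.e., of \cite[Theorem~3.1]{CharikarKMN01}): under $\mathcal{E}$ the greedy disk $M_v$ contains every uncovered node whose true probability to $v$ is $\ge q$ and is contained in the set of uncovered nodes whose true probability to $v$ is $\ge(1-\epsilon)q$, so Case~(2) of that argument is unaffected (it only compares these disk sizes), while in Case~(1) the chain of applications of Theorem~\ref{triangle} now yields, for $x\in M_\ell\cap O_{\pi(i)}$ and any $v\in O_{\pi(i)}$,
\[
\Pr(c_\ell\sim v)\ \ge\ \Pr(c_\ell\sim x)\cdot\Pr(x\sim o_{\pi(i)})\cdot\Pr(o_{\pi(i)}\sim v)\ \ge\ (1-\epsilon)\,q\cdot q\cdot q\ \ge\ (1-\epsilon)q^3,
\]
which still forces $v$ into the elimination disk $E_\ell$ under $\mathcal{E}$. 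Hence $|E_1\cup\cdots\cup E_k|\ge|O_1\cup\cdots\cup O_k|=n-t_q$, as needed. I expect this adaptation — checking that the $(1-\epsilon)$ shrinkage of the radii breaks neither the triangle-inequality chain nor the greedy comparison — to be the main technical obstacle; everything else is bookkeeping.

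Finally I would assemble the bound exactly as in the proof of Theorem~\ref{thm:k-median}. Let $q^*$ be the value furnished by Lemma~\ref{lem:k-median}, so $q^*(n-t_{q^*})/n\ge p_{\rm opt-avg}(k)/H(n)$, and take $\phi$ to be the true average connection probability over covered nodes (uncovered ones contributing $0$); these probabilities are all $\ge(1-\epsilon)q^3$ and so are estimated to relative error $\epsilon/2$ by the same $r$ samples, an error we suppress. If the while loop halts with $q>q^*$ then $\phi_{\rm best}\ge q^3>(q^*)^3\ge(1-\epsilon)\bigl(p_{\rm opt-avg}(k)/((1+\gamma)H(n))\bigr)^3$; otherwise, at the first iteration with $q^*/(1+\gamma)<q\le q^*$, ingredient (ii) gives a partial clustering with at most $t_q\le t_{q^*}$ uncovered nodes, and therefore
\[
\phi\ \ge\ (1-\epsilon)\,q^{3}\,\frac{n-t_q}{n}\ \ge\ (1-\epsilon)\Bigl(\frac{q^*}{1+\gamma}\Bigr)^{3}\frac{n-t_{q^*}}{n}\ \ge\ (1-\epsilon)\Bigl(\frac{p_{\rm opt-avg}(k)}{(1+\gamma)H(n)}\Bigr)^{3}.
\]
Since the while loop maintains the invariant $\mbox{avg-prob}(\mathcal{C}_{\rm best})\ge\phi_{\rm best}$, the stated lower bound on $\mbox{avg-prob}(\tilde{\mathcal{C}})$ follows, using the hypothesis $(p_{\rm opt-avg}(k)/H(n))^3\ge p_L$ to ensure the loop actually reaches such a value of $q$. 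The bound on the number of while-loop iterations is obtained as in Theorem~\ref{thm:k-median}: once $q$ enters $(q^*/(1+\gamma),q^*]$ the recorded $\phi_{\rm best}$ is already within a constant factor of its final value, so the guard $q^3\ge\phi_{\rm best}$ fails after only $O(1)$ further reductions of $q$, which together with $q^*\ge p_{\rm opt-avg}(k)/H(n)$ gives the claimed count.
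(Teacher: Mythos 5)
Your overall route is exactly the one the paper intends (the paper omits this proof, saying only that it is analogous to that of Theorem~\ref{thm:seq1}): a union bound over node pairs and while-loop iterations with $\delta$ chosen to match the sample size of Equation~(\ref{eq:samples-k-median}), a noise-tolerant rerun of the charging argument of Lemma~\ref{lem:outliers}, and then the case analysis of Theorem~\ref{thm:k-median} carrying an extra $(1-\epsilon)$ factor. The structure and the final bookkeeping are right.

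There is, however, a quantitative slip in exactly the step you flagged as the main obstacle. In your Case~(1) the triangle chain gives only $\Pr(c_\ell\sim v)\ge(1-\epsilon)q^3$, because the first link is certified only through an estimate: $x\in \tilde{M}_\ell$ guarantees true probability $\ge(1-\epsilon)q$ under $\mathcal{E}$, not $\ge q$. But your good event $\mathcal{E}$ guarantees $\tilde{p}(c_\ell,v)\ge(1-\tfrac{\epsilon}{2})q^3$ only for pairs whose true probability is at least $q^3$; for a pair with true probability $(1-\epsilon)q^3$ it yields only $\tilde{p}\ge(1-\tfrac{\epsilon}{2})(1-\epsilon)q^3$, which falls short of the removal threshold $(1-\tfrac{\epsilon}{2})q^3$. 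So, as written, $v$ is not forced into the elimination disk $E_\ell$ and the charging argument does not close. The repair is routine: define $\mathcal{E}$ to give relative error $\epsilon/2$ for all pairs with true probability at least $(1-\epsilon)q^3$ (this changes $r$ by only a constant factor) and lower the removal threshold accordingly, accepting a final guarantee with $(1-c\epsilon)$ for a small constant $c$ in place of $(1-\epsilon)$, which is then absorbed by rescaling $\epsilon$. With that adjustment the argument goes through; the remaining informalities (treating $\phi$ as computed from true rather than estimated probabilities, and the ``$O(1)$ further reductions'' phrasing in the iteration count) are at the same level of rigor as the paper's own proofs of Theorems~\ref{thm:seq1} and~\ref{thm:k-median}.
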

We observe that, for constant $\gamma$, the overall number of
samples required by our implementation is $\BO{(1/(p^3_{\rm
    opt-avg}(k)\epsilon^2) (\log n+\log\log ((\log n )/p_L))}$. 
Considering that we can safely set $p_L = k/n$, as mentioned at the beginning if the section,
we conclude that 
 $\BO{(1/(p^3_{\rm
    opt-avg}(k)\epsilon^2) \log n}$
samples suffice.


\section{Experiments} \label{sec:experiments}
  
\newcommand{\mcpc}{\algo{mcp}\xspace}
\newcommand{\acpc}{\algo{acp}\xspace}
\newcommand{\paralg}{\algo{Par}\xspace}
\newcommand{\gmm}{\algo{gmm}\xspace}
\newcommand{\mcl}{\algo{mcl}\xspace}

\begin{figure*}[t]
  \centering
  \includegraphics[width=\textwidth]{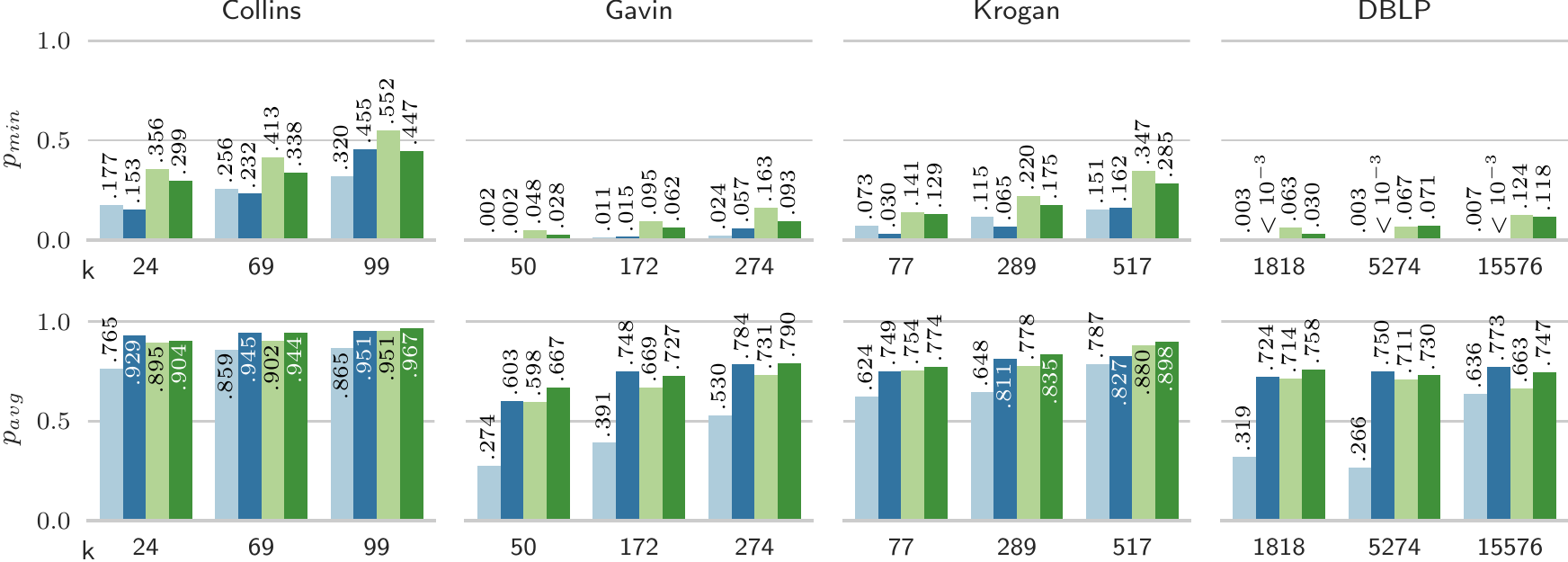}
  \begin{center}
    \begin{minipage}{.65\textwidth}
      \caption{Minimum connection probability ($p_{min}$, top row of plots) and average connection probability ($p_{avg}$, bottom row of plots).}
      \label{fig:p_min_p_avg}
    \end{minipage}
    \begin{minipage}{.3\textwidth}
      \centering
      \includegraphics{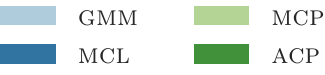}
    \end{minipage}
  \end{center}
\end{figure*}

We experiment with our clustering algorithms \mcpc and \acpc along two different lines.
First, in Subsection~\ref{sec:exp1}, we compare the quality
of the obtained clusterings against those  returned by well-established clustering
approaches in the literature on four uncertain graphs derived by three PPI networks
and a collaboration network. Then, in Subsection~\ref{sec:exp2}, we provide an
example of applicability of uncertain clustering as a predictive tool to spot so-called
\emph{protein complexes} in one of the aforementioned PPI networks.

\begin{table}[t]
  \caption{Graphs considered in our experiments. The number of nodes and the number of edges in the largest connected component are shown.}
  \label{tab:datasets}
  \centering
  \begin{tabular}{l rr}
    \toprule
    graph & nodes & edges \\
    \midrule
    \texttt{Collins} & 1004 & 8323 \\
    \texttt{Gavin} & 1727 & 7534 \\
    \texttt{Krogan} & 2559 & 7031 \\
    \texttt{DBLP} & 636751 & 2366461 \\
    \bottomrule
  \end{tabular}
\end{table}

The characteristics of the four different graphs used in our experiments
are summarized in Table~\ref{tab:datasets}.
Three graphs are PPI networks, with different distributions of edge
probabilities: \texttt{Collins}~\cite{CollinsEtAl07}, mostly comprising
high-probability edges; \texttt{Gavin}~\cite{GavinEtAl06}, where most
edges are associated to low probabilities, and the CORE network
introduced in~\cite{KroganEtAl06} (\texttt{Krogan} in the following),
which has one fourth of the edges with probability greater than 0.9,
and the others almost uniformly distributed between 0.27 and 0.9.
To exercise a larger spectrum of cluster granularities,
we target clusterings only for the largest connected component
of each graph. As a computationally more challenging instance, we also experiment with
a large connected subgraph of the DBLP collaboration network  with edge probabilities
obtained with the same procedure of~\cite{PotamiasBGK10}: each node
is an author, and two authors are connected by an edge if they are
co-authors of at least one journal publication. The probability of such
an edge is $1 - \exp\{-x/2\}$, where $x$ is the number of
co-authored journal papers. Consequently, a single collaboration
corresponds to an edge with probability $0.39$, and 2 and 5
collaborations correspond to edges with probability $0.63$ and $0.91$,
respectively.
Roughly $80\%$ of the edges have probability $0.39$, $12\%$ have
probability $0.63$ and the remaining $8\%$ have a higher probability.
While finding an accurate probabilistic model of the interactions
between authors is beyond the scope of this paper, the intuition behind
the choice of this distribution is that authors that are likely to
collaborate again in the future share an edge with large probability.

We implemented our algorithms in C++, with
the Monte Carlo sampling of possible worlds performed in parallel using OpenMP.
The code and data, along with instructions to reproduce the results presented in this section, are publicly available\footnote{
  \url{https://github.com/Cecca/ugraph}}.
When running both \mcpc and \acpc, we set $\gamma = 0.1$.  To optimize
the execution time, we set the probability threshold $q$ of
Algorithms~\ref{alg:k-center} and~\ref{alg:k-median} to $q_i = \max\{1
- \gamma\cdot 2^i, p_L\}$ in iteration $i$, with $p_L= 10^{-4}$.  Once
$q_i$ equals $p_L$ or is such that the associated clustering covers
all nodes, we perform a binary search between $q_i$ and $q_{i-1}$ to
find the final probability guess, stopping when the ratio between the
lower and upper bound is greater than $1-\gamma$.  
This procedure is essentially equivalent, up to constant factors in the final value of $q$,
to decreasing $q$ geometrically as is done in
Algorithms~\ref{alg:k-center} and~\ref{alg:k-median}, thus the
guarantees of Theorems~\ref{thm:seq1} and~\ref{thm:impl-kmedian} hold.
In the implementation of \acpc, we decided to invoke
\algo{min-partial} with parameters $(\mathcal{G},k,q,1,q)$ 
rather than $(\mathcal{G},k,q^3,n,q)$.  While this
setting does not guarantee the theoretical bounds stated in
Theorem~\ref{thm:impl-kmedian}, the values were chosen after testing
different combinations of the two parameters and finding that the
chosen setting provides better time performance while still returning
good quality clusterings in all tested scenarios.  In particular, we
found out that higher values of parameter $\alpha$ yielded similar scores,
albeit with a lower variance.  For both implementations, we verified
that setting parameter $\gamma$, which essentially controls a time/quality
tradeoff, to values smaller than $0.1$ increases the running time
without increasing the quality of the returned
clusterings  significantly.
%
Considering that the sample sizes defined in
Section~\ref{sec:implementation} are derived to tolerate very
conservative union bounds, we verified that in practice starting the
progressive sampling schedule from 50 samples always yields very
accurate probability estimates.
We omit the results of these preliminary experiments for lack of space.
Both our code and \algo{mcl} were
compiled with GCC 5.4.0, and run on a Linux 4.4.0 machine equipped
with an Intel I7 4-core processor and 18GB of RAM. Each figure we
report was obtained as the average over at least 100 runs, with the
exception of the bigger \texttt{DBLP} dataset, where only  5 runs were
executed for practicality.

\subsection{Comparison with other algorithms}
\label{sec:exp1}

A few remarks on the algorithms chosen for (or excluded from) the
comparison with \algo{mcp} and \algo{acp} presented in this subsection
are in order.  We do not compare with the clustering algorithm for
uncertain graphs devised in~\cite{KolliosPT13} because it does not
allow to control the number $k$ of returned clusters, which is central
in our setting.  (A comparison between \mcpc and the algorithm by
\cite{KolliosPT13} is offered in the next subsection with respect to a
specific predictive task.)  Also, we were forced to exclude from the
comparison the clustering algorithms of~\cite{LiuJAS12}, explicitly
devised for uncertain graphs, since the source code was not made
available by the authors and the  algorithms
do not lend themselves to a straightforward implmenetation.
Among the many clustering algorithms for
deterministic weighted graphs available in the literature, we selected
\mcl~\cite{Dongen08} since it has been previously employed in the
realm of uncertain graphs by using edge probabilities as weights.
Finally, we compare with an adaptation of the $k$-center approximation
strategy of \cite{Gonzalez85}, dubbed \gmm{}, where the clustering of
the uncertain graph is computed in $k$ iterations by repeatedly
picking the farthest node from the set of previously selected centers,
using the shortest-path distances generated by setting the weight of
any edge $e$ to $w(e) = \ln(1/p(e))$.  Considering the modest
performance of \gmm{} observed in our experiments, and the fact that
it has been repeatedly stated in previous works
\cite{LiuJAS12,PotamiasBGK10} that the application of shortest-path
based deterministic strategies to the probabilistic context yields
unsatisfactory results, we did not deem necessary to extend the
comparison to other such strategies.

\begin{figure*}[t]
  \centering
  \includegraphics[width=\textwidth]{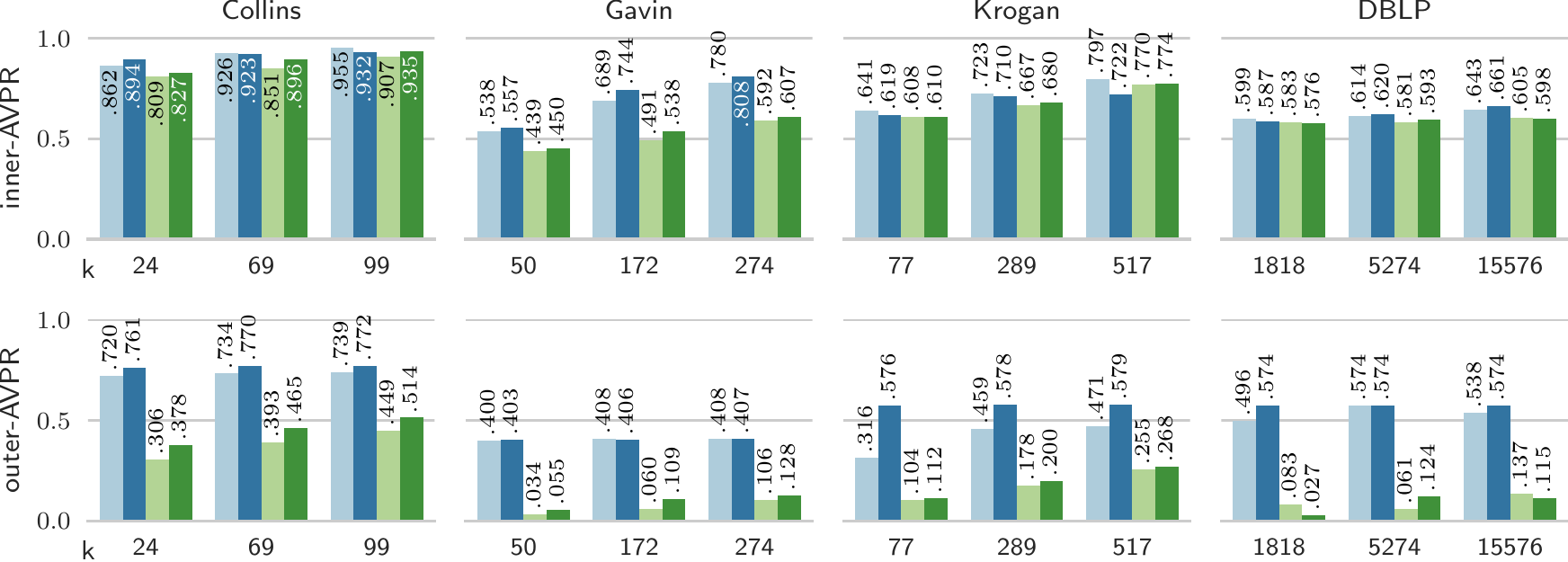}
  \begin{center}
    \begin{minipage}{.65\textwidth}
      \caption{Inner and outer Average Vertex Pairwise Reliability. For the inner-AVPR metric higher is better, for the outer-AVPR metric lower is better.}
      \label{fig:avpr}
    \end{minipage}
    \begin{minipage}{.3\textwidth}
      \centering
      \includegraphics{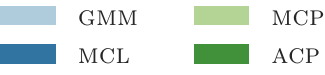}
    \end{minipage}
  \end{center}
\end{figure*}

We base our comparison both on our defined cluster quality metrics and
on other ones. Namely, for each clustering returned by the various
algorithms, we compute the minimum connection probability of any node
to its center\footnote{\small For \algo{mcl}, when computing this
  metric we consider as cluster centers the \emph{attractor nodes} as
  defined in \cite{Dongen08}.} (denoted simply as $p_{min}$), and the
average connection probability of all nodes to their respective
centers (denoted as $p_{avg}$).
Furthermore, we consider the inner Average Vertex Pairwise Reliability (also defined in~\cite{LiuJAS12}) which is the average connection probability of all pairs of nodes that are in the same cluster, namely,
\[
  \text{inner-AVPR} = \frac{\sum_{i=1}^\tau\sum_{u,v \in C_i} \Pr(u \sim v)}{\sum_{i=1}^\tau \binom{|C_i|}{2}},
\]
and the outer Average Vertex Pairwise Reliability, which is the average connection probability of pairs of nodes in different clusters, namely, 
\[
  \text{outer-AVPR} = \frac{\sum_{i=1}^\tau\sum_{u \in C_i, v\notin C_i} \Pr(u \sim v)}{
    \sum_{i=1}^\tau |C_i| \cdot |V \setminus C_i|}.
\]
Intuitively, a good clustering in terms of connection probabilities exhibits a low outer-AVPR and a significantly higher inner-AVPR, indicating that each cluster completely encapsulates a region of high reliability.
Inner-AVPR and outer-AVPR are akin to the internal and external \emph{cluster density} measures used in the setting of deterministic graphs~\cite{Schaeffer07}.
We also compare the algorithms in terms of their running time.

Recall from Section~\ref{sec:intro} that the number of clusters
computed by \algo{mcl} cannot be controlled accurately, but it is
influenced by the inflation parameter.  Therefore, for each graph in
Table~\ref{tab:datasets}, we run \algo{mcl} with inflation set to 1.2, 1.5,
and 2.0 for protein networks, and 1.15, 1.2, and 1.3 for
\texttt{DBLP}, so to obtain a reasonable number of clusters.  We then
run the other algorithms with a target number $k$ of clusters matching
the granularity of the clustering returned by each \algo{mcl} run.
Note that, in terms of running time, this setup favors \mcl{}: if we
were instead given a target number of clusters, we would need to
perform a binary search over the possible inflation values to
make \mcl{} match the target, and the running time for \mcl{} would
become the sum of the times over all these search trials.

As expected, with respect to the $p_{min}$ metric
(Figure~\ref{fig:p_min_p_avg}, top) \mcpc is always better than all other
algorithms.  In particular, on the \texttt{DBLP} graph, both \gmm{}
and \mcl find clusterings with $p_{min}$ very close to zero $(<
10^{-3})$, meaning that there is at least one pair of nodes in the
same cluster with almost zero connection probability. In contrast,
\mcpc finds clusterings with $p_{min}$ very close to $0.1$ or
larger. The inferior performance of \gmm{}, a clustering algorithm
aiming at optimizing an extremal statistic like $p_{min}$ in a
deterministic graph, provides evidence that naive adaptations of
deterministic clustering algorithms to the probabilistic scenario
struggle to find good solutions. Further evidence in this direction is
provided by the fact that \acpc{} yields higher values of $p_{min}$
than both \gmm{} and \mcl.

For what concerns the $p_{avg}$ metric (Figure~\ref{fig:p_min_p_avg}, bottom),
observe that, being an average, the metric hides the presence of low
probability connections in the same cluster, which explains the much
higher values returned by all algorithms compared to $p_{min}$.
Somewhat surprisingly, we have that \mcl and \acpc{} have comparable
performance. Recall, however, that \mcl does not guarantee total control on the 
number of clusters, which makes \acpc{} a more
flexible tool. In all cases, the \gmm algorithm finds clusters with a
$p_{avg}$ that is lower than the one obtained by the other algorithms.
We observe that the experiments provide evidence that the actual
values of $p_{avg}$ obtained with \acpc are arguably much higher than
the theoretical bounds proved in Theorem~\ref{thm:impl-kmedian}, and
we conjecture that this also holds for $p_{min}$.
Consider now the inner- and outer-AVPR metrics (Figure~\ref{fig:avpr},
resp. top and bottom). For all graphs, the clusterings computed by \mcpc
and \acpc feature an inner-AVPR comparable to \gmm and \mcl, but a
considerably lower outer-AVPR, which is a desirable property of a clustering, as
mentioned earlier. Conversely, for a given graph and value of $k$,
\mcl and \gmm compute clusterings where the inner- and outer-AVPR scores
are similar. This fact suggests that the other clustering
strategies are driven more by the
topology of the graphs rather than by the connection probabilities.

\begin{figure*}
  \centering
  \includegraphics[width=\textwidth]{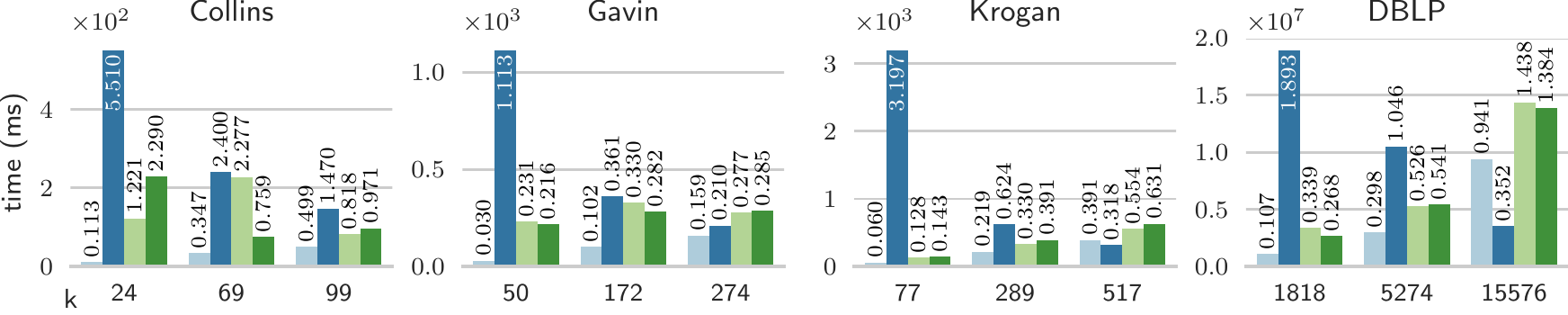}
  \begin{center}
    \begin{minipage}{.4\textwidth}
      \caption{Running times in milliseconds.}
      \label{fig:time}
    \end{minipage}
    \begin{minipage}{.25\textwidth}
      \centering
      \includegraphics{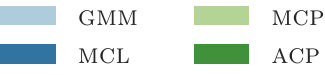}
    \end{minipage}
  \end{center}
\end{figure*}

\iflong
As for the running times (Figure~\ref{fig:time}), \gmm is almost
always the fastest algorithm, due to its obliviousness to the possible
world semantics, which entails resorting to expensive Monte Carlo sampling, and its running
time grows linearly in $k$.  On the other hand, the running time of \mcl exhibits an
inverse dependence on $k$ since clusterings for low values of $k$,
which are arguably more interesting in practical scenarios, are the
most expensive to compute. Thanks to the use of progressive sampling,
our algorithms feature a running time which is significantly better or
comparable to \mcl, depending on the granularity of the clustering.
With respect to \gmm, our algorithms are slower, but they provide far
better clusterings, as discussed above.

\begin{figure}[t]
  \centering
  \includegraphics[width=\columnwidth]{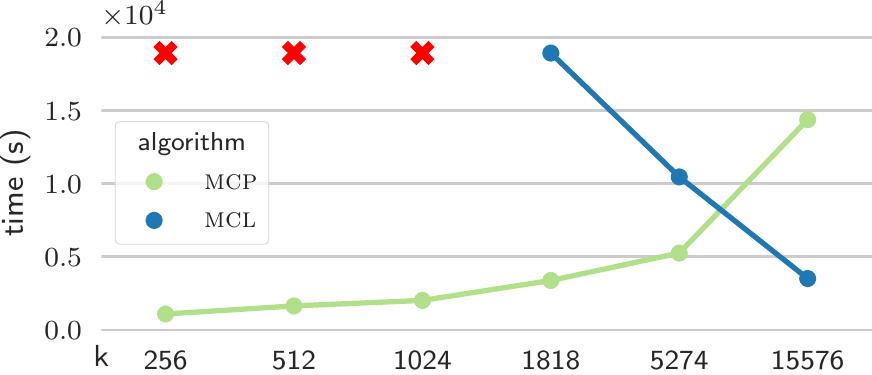}
  \caption{Running time (in seconds) versus $k$ on the DBLP graph.
    Red crosses denote
    runs in which \algo{\sc mcl} failed due to memory issues: we consider the running time for $k=1818$ as a lower bound for the execution time of these runs.}
  \label{fig:scalability}
\end{figure}

Figure~\ref{fig:scalability} shows a more detailed comparison of the
running time of \mcpc and \mcl on the \texttt{DBLP} dataset as a
function of the number $k$ of clusters.
For small values of $k$, \mcl
crashed after exhausting the memory available on the system, whereas
our algorithm is very efficient for such values.
A comparison between  \acpc{}  and \mcl would yield the same
conclusions since the running time of \acpc{} is comparable to that
of \mcpc (see Figure~\ref{fig:time}).

\else 

As for the running times (Figure~\ref{fig:time}), \gmm is almost
always the fastest algorithm, due to its obliviousness to the possible
world semantics that requires Monte Carlo sampling, and its running
time grows linearly in $k$.  On the other hand, \mcl exhibits an
opposite dependence on $k$ since clusterings for low values of $k$,
which are arguably more interesting in practical scenarios, are the
most expensive to compute.
Furthermore, we observed that the memory required by \mcl increases 
sharply for small values of $k$ (that is for small values of its inflation parameter):
on the \texttt{DBLP} graph for $k < 1818$, \mcl crashed after exhausting the memory available on the system, whereas our algorithm is very efficient for such values.
Thanks to the use of progressive sampling,
our algorithms feature a running time which is significantly better or
comparable to \mcl, depending on the granularity of the clustering.
With respect to \gmm our algorithms are slower, but they provide far
better clusterings, as discussed above.

\fi

\subsection{Clustering as a predictive tool}
\label{sec:exp2}
In PPI networks, proteins can be grouped in so-called
\emph{complexes}, that is, groups of proteins that stably interact
with one another to perform various functions in a cell.  Given a PPI
network, a crucial problem is to predict protein pairs belonging to
the same complex.  Our specific benchmark is the \texttt{Krogan}
graph, for which the authors published a clustering with 547 clusters
obtained using \algo{mcl} with a configuration of parameters that
maximizes biological significance~\cite[Suppl. Table
  10]{KroganEtAl06}.  We consider a ground truth derived from the
publicly available, hand-curated MIPS database of protein
complexes~\cite{MewesEtAl04,MIPS} as used in~\cite{KroganEtAl06}.
For the purpose of the evaluation,
we restrict ourselves to proteins appearing in both \texttt{Krogan}
and MIPS, thus obtaining a ground truth with 3,874 protein pairs. The
input to the clustering algorithms is the entire \texttt{Krogan}
graph.  We evaluated the returned clusterings in terms of the
confusion matrix. Namely, a pair of proteins assigned to the same
cluster is considered a true positive if both proteins appear in the
same MIPS complex, and a false positive otherwise.

For brevity, we restrict ourselves to exercise \mcpc and \acpc considering
only $d$-connection probabilities (see Section~\ref{sec-depthlimit})
for different values of $d$, and by setting $k=547$ so as to match the
cardinality of the reference clustering from \cite{KroganEtAl06}.  The
idea behind the use of limited path length is that we expect proteins
of the same complex to be connected with a high probability and, at
the same time, to be topologically close in the graph.  We do not
report results for $d=1$ since there is no clustering of the
\texttt{Krogan} graph with $d=1$ and $k=547$. We compare the True
Positive Rate (TPR) and the False Positive Rate (FPR) obtained by
\mcpc with different values of $d$ against those obtained with the
\mcl-based clustering of \cite{KroganEtAl06}, and with the clustering
computed by the algorithm in~\cite{KolliosPT13} (dubbed \algo{kpt} in
what follows).  The results are shown in
Table~\ref{tab:ground-truth}. We observe that, for small values of
$d$, our algorithm is able to find a clustering with scores similar to
the clustering of \cite{KroganEtAl06}, while higher values of $d$
yield fewer false negatives at the expense of an increased number of
false positives.
Note that \acpc is slightly better than \mcpc w.r.t. the TPR, and \mcpc
tends to be more conservative when it comes to the FPR.
Furthermore, the FPR performance of \acpc degrades more quickly as the depth
increases. This is because \acpc optimizes a measure that allows clusters
where a few nodes are connected with low probability to their centers,
and this effect amplifies as the depth increases.
Thus, we can choose to use \mcpc if we want to keep the number of false
positives low, or \acpc to achieve a higher TPR.
Observe that a moderate number of false positives
may be tolerable, since the corresponding protein pairs can be the
target of further investigation to verify unknown protein
interactions.  Also,  \mcpc, \acpc, and \mcl yield TPRs substantially
higher than \algo{ktp}\footnote{\small We remark that the performance
  of \algo{ktp} reported here differs from the one reported in
  \cite{KolliosPT13} since the ground truth considered in that paper
  only comprises pairs of proteins that appear in the same complex in
  the MIPS database and are connected by an edge in the
  \texttt{Krogan} graph, which clearly amplifies the TPR.
}.
This experiment supports our intuition that considering topologically close
proteins, while aiming at high connection probabilities, makes our
algorithm competitive with state-of-the-art solutions in the predictive
setting.

\begin{table}[t]
  \caption{\algo{\sc mcp} and \algo{\sc acp} with limited path length against
    \algo{\sc mcl} and \algo{\sc kpt} on \texttt{Krogan} w.r.t. the MIPS ground truth.}
  \label{tab:ground-truth}
  \centering
  \begin{tabular}{lc rr rr}
    \toprule
    &
    & \multicolumn{2}{c}{TPR}
    & \multicolumn{2}{c}{FPR} \\
    \midrule
    & & \algo{mcp} & \algo{acp} & \algo{mcp} & \algo{acp} \\

    \multirow{5}{*}{Depth $\displaystyle\left\{\begin{tabular}{c} \\ \\ \\ \\ \end{tabular}\right.$ \hspace{-3em}}
    & 2    &  0.344  &  0.384 &  0.003 &  0.006  \\
    & 3    &  0.416  &  0.459 &  0.012 &  0.078  \\
    & 4    &  0.429  &  0.585 &  0.147 &  0.419  \\
    & 6    &  0.695  &  0.697 &  0.604 &  0.633  \\
    & 8    &  0.737  &  0.730 &  0.678 &  0.647  \\

    \midrule
    \multicolumn{2}{c}{\algo{mcl}~\cite{KroganEtAl06}}
    & \multicolumn{2}{c}{0.423}
    & \multicolumn{2}{c}{0.002} \\
    \multicolumn{2}{c}{\algo{kpt}~\cite{KolliosPT13}}
    & \multicolumn{2}{c}{0.187}
    & \multicolumn{2}{c}{$6.3 \cdot 10^{-4}$} \\
    \bottomrule
  \end{tabular}
\end{table}


\section{Conclusions}
\label{sec:conclusions}
We presented a number of algorithms for clustering uncertain graphs
which aim at maximizing either 
the minimum or the average connection probability of
a node to its cluster's center. Unlike previous approaches, our
algorithms feature provable guarantees on the clustering quality, 
and afford efficient implementations and an effective
control on the number of clusters.  We also provide an open
source implementation of our algorithms which compares favorably with
algorithms commonly used when dealing with uncertain graphs.

Our algorithms \mcpc and \acpc are based on objective
functions capturing different statistics (minimum, average) of
connection probabilities, that are not directly comparable.
As such, there cannot be clear guidelines for choosing one
of the two in practical  scenarios. Considering that \mcpc and \acpc
feature comparable running times, the most reasonable approach could
be to apply both and then choose the most ``suitable'' returned clustering,
where suitability could be measured, for instance, through the use of
external metrics such as inner/outer-AVPR.

Several challenges remain open for future research.  
From a complexity
perspective, the conjectured NP-hardness of the ACP problem and,
possibily, inapproximability results for both the MCP and ACP problems
remain to be proved.  More interestingly, there is still ample room
for the development of practical algorithms featuring better
analytical bounds on the approximation quality and/or faster
performance, as well as for the investigation of other clustering
problems on uncertain graphs.


\balance
\bibliographystyle{abbrv}
\bibliography{references}

\end{document}